\documentclass[11pt]{article}
\pdfoutput=1

\usepackage{fullpage}
\usepackage{algorithm}
\usepackage{algorithmic}

\usepackage[numbers]{natbib}
\usepackage{hyperref}
\usepackage{amsthm}
\usepackage{amsmath}
\usepackage{amsfonts}
\usepackage{breqn}
\usepackage{graphicx}
\usepackage{subfig}
\usepackage[table]{xcolor}
\usepackage{multirow}

\vfuzz2pt 
\hfuzz2pt 

\theoremstyle{definition}

\theoremstyle{remark}


\newtheorem{theorem}{Theorem}
\newtheorem{lemma}{Lemma}

\newtheorem{definition}{Definition}

\newcommand{\tabincell}[2]{\begin{tabular}{@{}#1@{}}#2\end{tabular}}

\begin{document}

\title{PrivMin: Differentially Private MinHash\\
for Jaccard Similarity Computation}
\author{Ziqi Yan\thanks{Beijing Key Laboratory of Security and Privacy in Intelligent Transportation, Beijing Jiaotong University, Beijing 100044, China. \texttt{\{zichiyen,jqliu,zhan,qiushuo\}@bjtu.edu.cn}.}
\and Jiqiang Liu\footnotemark[1]\ {~}
\and Gang Li\thanks{School of Information Technology, Deakin University, Geelong VIC 3125, Australia. \texttt{gang.li@deakin.edu.au}.}
\and Zhen Han\footnotemark[1]\ {~}
\and Shuo Qiu\footnotemark[1]\ {~}
}

\maketitle
\thispagestyle{empty}

\begin{abstract}
In many industrial applications of big data,
the \emph{Jaccard Similarity Computation} has been
widely used to measure the distance between two profiles
or sets respectively owned by two users.
Yet,
one semi-honest user with unpredictable knowledge may also
deduce the private or sensitive information (e.g., the existence
of a single element in the original sets)
of the other user via the shared similarity.

In this paper,
we aim at solving the privacy issues in Jaccard similarity
computation with strict differential privacy guarantees.
To achieve this,
we first define the \emph{Conditional $\epsilon$-DPSO},
a relaxed differential privacy definition
regarding set operations,
and prove that
the \emph{MinHash-based Jaccard Similarity Computation}
(\texttt{MH-JSC})
satisfies this definition.
Then for achieving strict differential privacy in \texttt{MH-JSC},
we propose the \texttt{PrivMin} algorithm,
which consists of two private operations:
1) the \emph{Private MinHash Value Generation} that
works by introducing the \emph{Exponential} noise
to the generation of MinHash signature.
2) the \emph{Randomized MinHashing Steps Selection} that
works by adopting \emph{Randomized Response} technique
to privately select several steps within the MinHashing phase
that are deployed with the \emph{Exponential} mechanism.
Experiments on real datasets demonstrate that the proposed
\texttt{PrivMin} algorithm can successfully
retain the utility of the computed similarity
while preserving privacy.
\end{abstract}

\newpage
\thispagestyle{empty}

\tableofcontents

\newpage
\pagenumbering{arabic}

\section{Introduction}\label{sec-introduction}

With the widespread of real-world big data
applications such as recommendation systems
and social network,
\emph{similarity computation} has become
one essential process as it measures the distances
between different user profiles.
Utilizing the similarity between users or items,
the service providers can further carry out
data analytic tasks such as
clustering,
classification or recommendation.
Among the varieties of similarity measures,
\emph{Jaccard Similarity} is a popular one
that has been widely used to compare the similarity of two given sets.
More specifically,
for two sets $S_{A}$ and $S_{B}$,
their Jaccard similarity is defined as the ratio between
the size of their intersection and
the size of their union.

However,
because of the adoption of Jaccard similarity in real-world applications,
one increasing concern is the potential privacy leakage.
Let us consider one example scenario as described below.

\begin{description}\label{exp:attack}
\item[Example 1]
In cloud services,
the similarity computation may be available for users
who want to know the ``semantic distance'' between their data.

Assume a cloud platform which provides such a
service of Jaccard similarity computation
between two sets
$S_{A}$ and $S_{B}$ privately owned by users
$U_{A}$ and $U_{B}$,
respectively.
Each set contains a fixed number of textual tags
that reflect the user's reading preferences,
such as
$S_{A} =\{$\texttt{History}, \texttt{Politics}, \texttt{Science}, \texttt{Law}, \texttt{Travel}$\}$
and $S_{B} = \{$\texttt{History}, \texttt{Science}, \texttt{Travel}, \texttt{Cookbooks}, \texttt{Fiction}$\}$.
The cloud service may estimate the Jaccard similarity as
$\frac{|S_{A} \cap S_{B}|}{|S_{A} \cup S_{B}|} = 0.429$,
and make it available for users $U_{A}$ and $U_{B}$.

Remarkably,
based on the above shared similarity and
the prior knowledge that
the value of $|S_{A} \cup S_{B}|$
should fall into the range of $5$ to $10$,
$U_{B}$ can easily work out
the values of $|S_{A} \cap S_{B}|$ and $|S_{A} \cup S_{B}|$,
which are $3$ and $7$ respectively.
Moreover,
if $U_{B}$ further knows in advance that
$U_{A}$ does not like the book genres such as \texttt{cookbooks} and \texttt{fiction},
he would basically make sure
that his three tags in common with $U_{A}$
are $\{$\texttt{History}, \texttt{Science}, \texttt{Travel}$\}$.
In addition,
when we take into account the fact that there are two collusive users
$U_{B}$ and $U_{C}$
who are interested in $U_{A}$'s private information,
it would not take these collusive users
much background knowledge to achieve their purpose.
$U_{B}$  would also achieve the attack goal easily
through the similarity
with $U_{A}$ via different carefully constructed sets.
\end{description}

As shown in the above example,
users with background knowledge
can induce other users' private information with high probability
by observing their shared similarity.
Hence,
how to preserve the privacy in the Jaccard similarity computation
is an emerging issue that needs to be addressed.

In the past decade,
\emph{Differential Privacy} has emerged as a solid privacy model
with a provable privacy guarantee,
regardless of the adversary's background knowledge.
Recently,
some researches have focused on the privacy issue
in similarity computation by incorporating the \emph{differential privacy} mechanism.
Alaggan~{et~al.}~\cite{AlagganGK11OPODIS} proposed several
secure protocols to compute differentially private values
of \texttt{Scalar Product} and \texttt{Cosine} similarity.
Their follow-up paper~\cite{AlagganGK12SSS} proposed
a differentially private method for randomizing
the intermediate outputs instead of adding noise to the final \texttt{Cosine} similarity outputs.
Wong~{et~al.}~\cite{WongH14SWJ} first proposed a secure protocol
for a specific Jaccard similarity computation for the binary data.
However,
those tailored Jaccard similarity computations cannot
be generalized to other situations.
To the best of our knowledge,
there is limited researches that have addressed the privacy concerns
in the general Jaccard similarity computation
while maintaining the acceptable utility and efficiency.

As the advances in \emph{Hashing} techniques,
such as the \texttt{MinHash} and \texttt{SimHash},
the current research barriers can be tackled in a natural way.
The \emph{MinHash} technique~\cite{Broder1997IEEE} was proposed
to efficiently approximate the value of Jaccard similarity instead of the precise one,
so it can significantly improve the computation efficiency
when a large collection of data involved~\cite{BroderCFM98STOC, DasDGR07WWW}.
For convenience,
we refer to this processing workflow as
\emph{MinHash-based Jaccard Similarity Computation}
(\texttt{MH-JSC}).

In this paper,
we will present an intuition that
the \texttt{MH-JSC} is internally connected with a relaxed differential privacy,
because its expected error $\theta$ can be regarded as noise.
This intuition opens the opportunity to design a differentially private
Jaccard similarity computation algorithm,
which protects the certainty of presence/absence of any element in the original profile.
However,
there are still two main challenges
when designing the differentially private Jaccard similarity computation algorithm:
\begin{itemize}
\item
The first challenge is
how to measure the randomness within the \texttt{MH-JSC}
for further analyzing its relationship with the differential privacy.

\item
The second challenge lies on
how to leverage the minimum hash value computation process
within the MinHashing phase
for achieving strict differential privacy in \texttt{MH-JSC}
while maintaining an acceptable utility.
\end{itemize}

For the first challenge,
we investigate the relationship
between the \texttt{MH-JSC} and the differential privacy
via a relaxed differential privacy definition,
\emph{Conditional $\epsilon$-DPSO}.
Based on this,
we intend to design a differentially private Jaccard similarity
computation algorithm via the \emph{Exponential} mechanism,
which leverages the minimum hash value computation process
within the MinHashing phase.
As the introduced \emph{Exponential} noise will distort the utility in a large extent,
the second challenge can be solved
by introducing the \emph{Randomized Response} technique
to privately select some MinHashing steps
for adopting the \emph{Exponential} mechanism.

Based on these,
we finally present the \texttt{PrivMin} algorithm
to achieve the differentially private Jaccard similarity computation,
and the contributions in this paper can be summarized as follows:
\begin{itemize}
\item
Firstly,
through the relaxed differential privacy,
\emph{Conditional $\epsilon$-DPSO},
we investigate the randomness within the \texttt{MH-JSC}
and provide relevant privacy analysis in detail.

\item
Secondly,
we design a practical differential private
Jaccard similarity computation algorithm,
\texttt{PrivMin},
which maintains an acceptable utility.
Theoretical analysis and extensive experiments
are provided to verify the improved performance.
\end{itemize}

The rest of this paper is organised as follows.
We present the preliminaries and related works in
Section~\ref{sec-preliminaries},
and provide the problem statement in Section~\ref{sec-problem_statement}.
In Section~\ref{sec-dpso},
we define the \emph{Conditional $\epsilon$-DPSO}
to depict a relax situation
when considering differential privacy for set operations,
followed by theoretical privacy analysis of
the \texttt{MH-JSC} under this definition.
In Section~\ref{sec-private-jaccard-similarity-computation},
we describe the \texttt{PrivMin} algorithm
for achieving the differentially private MinHash-based
Jaccard similarity computation.
The theoretical privacy analysis and utility analysis of
the algorithm are proposed in Section~\ref{sec-analysis}.
Section~\ref{sec-experiment} presents experimental results,
and conclusions are given in Section~\ref{sec-conclusion}.

\section{Preliminaries and Related Works}
\label{sec-preliminaries}

This section reviews four fundamental concepts:
\emph{Jaccard Similarity},
\emph{MinHash},
\emph{Differential Privacy}
and \emph{Randomized Response},
and then briefly surveys the related works
in \emph{Differentially Private Similarity Computation}.

Table~\ref{tab:notations} lists the relevant notations
used in this paper.

\begin{table}[h]  \centering
\small
  \caption{Notations}\label{tab:notations}
  \begin{tabular}{l|l}
\hline
    \rowcolor[HTML]{EFEFEF}
    \textbf{Symbol}
    & \textbf{Description}\\
\hline
    $\texttt{MH-JSC}$
    & abbreviation of MinHash-based Jaccard similarity computation  \\ \hline

    $S$
    & user's private profile $S=\{s_{1},s_{2},...,s_{N}\}$  \\ \hline

    $J(S_{A},S_{B})$
    & original Jaccard similarity of $S_{A}$ and $S_{B}$  \\ \hline

    $\sigma$
    & a conventional notation to represent the value of $J(S_{A},S_{B})$  \\ \hline

    $J_{mh}(S_{A},S_{B})$
    & original MinHash-based Jaccard similarity of $S_{A}$ and $S_{B}$  \\ \hline

    $\theta$
    & expected error in $J_{mh}(S_{A},S_{B})$ compared with $J(S_{A},S_{B})$  \\ \hline

    $\widetilde{J_{mh}}(S_{A},S_{B})$
    & perturbed MinHash-based Jaccard similarity of $S_{A}$ and $S_{B}$  \\ \hline

    $\Delta J_{mh}$
    & the sensitivity of $\texttt{MH-JSC}$  \\ \hline

    $K$
    & number of hash functions  \\ \hline

    $h_{k}(S)$
    & hash values set of profile $S$ when given a hash function $h_{k}{(\cdot)}$   \\ \hline

    $min\{h_{k}(S)\}$
    & the minimum hash value in $h_{k}(S)$  \\ \hline

    $h_{(K)}(S)$
    & original MinHash signature vector  \\ \hline

    $\widetilde{h_{(K)}}(S)$
    & perturbed MinHash signature vector  \\ \hline

    $\overrightarrow{V}$
    & original flip vector  \\ \hline

    $P_{r}$
    & bit flipping probability in original flip vector generation  \\ \hline

    $\overrightarrow{V'}$
    & perturbed flip vector  \\ \hline

    $P_{t}$
    & bit flipping probability in perturbed flip vector generation  \\ \hline

    $\epsilon$
    & overall privacy budget  \\ \hline

\hline
\end{tabular}
\end{table}

\subsection{Preliminaries}

\begin{definition}[Jaccard Similarity]
Assume $S_{A}$ and $S_{B}$ are two sets owned by
user $A$ and user $B$ respectively.
Their Jaccard similarity is defined as
\begin{equation}\label{equ:Jsim}
J(S_{A},S_{B})=\frac{|S_{A} \cap S_{B}|}{|S_{A} \cup S_{B}|}.
\end{equation}
\end{definition}

\subsubsection{MinHash}\label{sec-minhash}

The MinHash was initially proposed
in~\cite{Broder1997IEEE,BroderCFM98STOC}
for quickly estimating
the similarity between two textual documents
which have been respectively expressed as
sets $S_{A}$ and $S_{B}$.
The basic intuition for the MinHash technique is
the replacement of the original sets $S_{A}$ and $S_{B}$
by their relevant \emph{MinHash Signatures}
$h_{(K)}(S_{A})$ and $h_{(K)}(S_{B})$
when computing the Jaccard similarity.
For the convenience of the following descriptions,
we refer to the above similarity estimating process as
\emph{MinHash-based Jaccard Similarity Computation} (\texttt{MH-JSC}).

In practice,
the \texttt{MH-JSC} between textual documents
usually involves three main phases:
the \emph{Shingling} phase to formulate
the textual documents into set representations,
the \emph{MinHashing} phase to generate
the relevant MinHash signatures,
followed by the approximate computation phase.

Specifically,
in the \emph{Shingling} phase,
the document is firstly segmented into $N$ parts (shingles)
and represented as the set
$S=\{s_{1},s_{2},...,s_{N}\}$;
in the process of the MinHashing phase,
$K$ hash functions $h_k$ with $k \in [1,K]$
are orderly applied to $S$ and generate
$h_{k}(S)=\{h_{k}(s_1),h_{k}(s_2),...,h_{k}(s_N)\}$,
and then the minimum hash value
$min\{h_{k}(S)\}$
is selected as the $k$-th element of
the MinHash Signature $h_{(K)}(S)$,
as shown in Fig~\ref{fig:minhashing}.

\begin{figure}[htbp]
\centering
\includegraphics[scale=0.5]{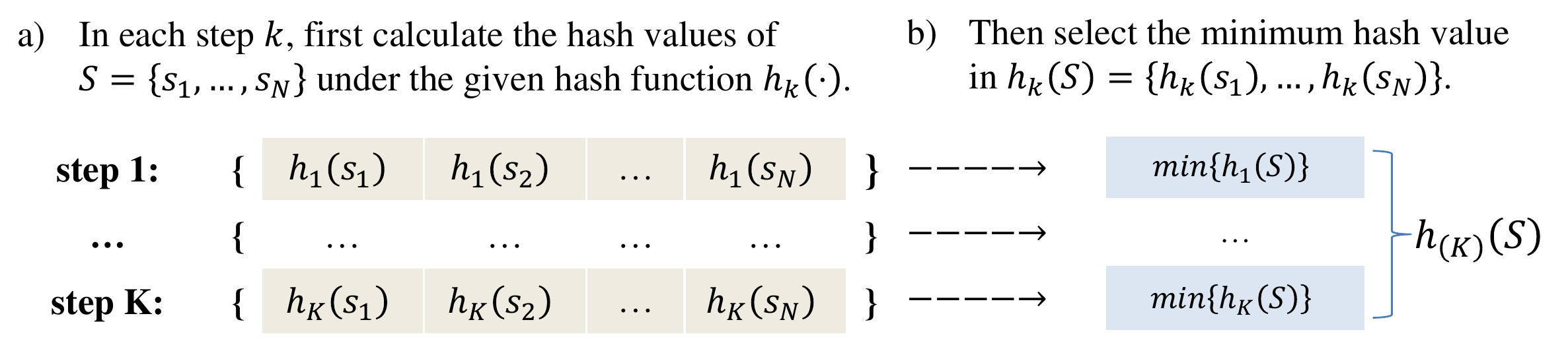}
\caption{MinHashing Phase}
\label{fig:minhashing}
\end{figure}

Given the MinHash signatures $h_{(K)}(S_{A})$ and $h_{(K)}(S_{B})$
for two textual documents $S_{A}$ and $S_{B}$,
an unbiased estimate of the Jaccard similarity between
$S_{A}$ and $S_{B}$ is formulated as
\begin{equation}\label{equ:Jsimmh}
J_{mh}(S_{A},S_{B})=\frac{|h_{(K)}(S_{A})
\cap h_{(K)}(S_{B})|}{K},
\end{equation}
with an expected error
$\theta = O(\frac{1}{\sqrt{K}})$.

For the convenience of description,
if we adopt the notation $\sigma$ to represent the value of $J(S_{A},S_{B})$,
the probability for $J_{mh}(S_{A},S_{B})$ to
fall into the range $[\sigma-\theta, \sigma+\theta]$
can be calculated via the following equation~\cite{Broder1997IEEE}:
\begin{equation}\label{equ:approx.prob}
p(K,\sigma,\theta)=
\sum_{K(\sigma-\theta) \leq t \leq K(\sigma+\theta)}
\binom{K}{t}(\sigma)^{t}(1-\sigma)^{K-t}.
\end{equation}

\subsubsection{Differential Privacy}

Differential privacy
is based on the principle that the output of a computation
should not allow inference about any element's presence
or absence from the computation's input.
Hence in the context of Jaccard similarity computation,
the present or absent status of the elements within input data
is expected to be protected under the rigorous
differential privacy definition which is described below.

\begin{definition}[$\epsilon$-Differential Privacy~\cite{Dwork11Encyclopedia}]
\label{def:DP}
A randomized algorithm $\mathcal{M}$ gives
$\epsilon$-differential privacy
if for all neighbour sets $S$ and $S'$ differing on at most one element,
and all $O \subseteq Range(\mathcal{M})$,
we have
\begin{center}
$Pr[\mathcal{M}(S) \in O] \leq e^{\epsilon}
\cdot Pr[\mathcal{M}(S') \in O]$.
\end{center}
\end{definition}

Algorithm $\mathcal{M}$ is associated with the \emph{sensitivity},
which measures the maximum change on the result of query function
$f$ when one element from the set $S$ changes~\cite{Dwork11CACM}.

\begin{definition}[Sensitivity]\label{def:sensitivity}
For any function $f: S \rightarrow \mathbb{R}^{d}$,
and for all $S$,
$S'$ differing in at most one element,
the sensitivity of $f$ is
$\bigtriangleup f=\max \limits_{S,S'} \Vert{f(S)-f(S')}\Vert_{1}$.
\end{definition}

To satisfy the definition of differential privacy,
two basic mechanisms are usually utilized:
the \emph{Laplace} mechanism and the \emph{Exponential} mechanism.
And the \emph{Laplace} mechanism is suitable for numeric output
and relies on the strategy of adding the \emph{Laplacian} noise
$Laplace(\cdot)$ to the query result~\cite{DworkMNS06TCC}.
It is formally defined as the following:

\begin{definition}[Laplace Mechanism]\label{def:lap-mech}
Given a function $f: S\rightarrow \mathbb{R}^{d}$,
the mechanism,
\begin{center}
$\mathcal{M}(S)=f(S)+(Y_{1},...,Y_{d})$,
\end{center}
where $Y_{i}$ are i.i.d random variables drawn from
$Laplace(\frac{\bigtriangleup f}{\epsilon})$.
\end{definition}

The \emph{Exponential} mechanism focuses on queries with
non-numeric output~\cite{McSherryT07FOCS}.
It pairs with an application dependent \emph{Score Function} $q(S,\psi)$,
which represents how good an output scheme $\psi$ is for query $q$.
The \emph{Exponential} mechanism is formally defined as

\begin{definition}[Exponential Mechanism]\label{def:exp-mech}
An \emph{Exponential} mechanism $\mathcal{M}$ is
$\epsilon$-differential privacy if
\begin{center}
$\mathcal{M}(S)=\{$return $\psi$ with the probability
$\propto \textsf{exp}(\frac{\epsilon q(S,\psi)}{2\bigtriangleup q})\}$.
\end{center}
\end{definition}

To guarantee the overall privacy
when it comes to a sequence of differentially private operations,
we have the following composition properties~\cite{McSherry09SIGMOD}.

\begin{theorem}[Sequential Composition]\label{thr:sequential-composition}
Given $n$ independent randomized algorithms
$\mathcal{A}_{1},\mathcal{A}_{2},...,\mathcal{A}_{n}$
where
$\mathcal{A}_{i} (1 \leq i  \leq n)$ satisfies
$\epsilon_{i}$-differential privacy,
a sequence of $\mathcal{A}_{i}$ over the dataset
$S$ satisfies $\epsilon$-differential privacy,
where $\epsilon = \sum \nolimits_{1}^{n}(\epsilon_{i})$.
\end{theorem}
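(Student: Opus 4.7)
The plan is to reason about the joint output of the composed sequence on two neighboring datasets $S$ and $S'$ differing in at most one element, and exploit independence to factor the joint probability. First I would fix an arbitrary measurable event $O = O_{1} \times O_{2} \times \cdots \times O_{n}$ in the product range $\prod_{i=1}^{n} Range(\mathcal{A}_{i})$. By the independence hypothesis on the $\mathcal{A}_{i}$, the probability that the joint output lies in $O$ factors as $\Pr[(\mathcal{A}_{1}(S),\ldots,\mathcal{A}_{n}(S)) \in O] = \prod_{i=1}^{n} \Pr[\mathcal{A}_{i}(S) \in O_{i}]$.

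Next I would apply the $\epsilon_{i}$-differential privacy guarantee of each individual mechanism to each factor: $\Pr[\mathcal{A}_{i}(S) \in O_{i}] \leq e^{\epsilon_{i}}\, \Pr[\mathcal{A}_{i}(S') \in O_{i}]$. Multiplying these $n$ inequalities and pulling all the exponentials out yields a factor $\prod_{i=1}^{n} e^{\epsilon_{i}} = e^{\sum_{i=1}^{n} \epsilon_{i}} = e^{\epsilon}$, and the remaining product recombines (again by independence) into $\Pr[(\mathcal{A}_{1}(S'),\ldots,\mathcal{A}_{n}(S')) \in O]$. This is exactly the $\epsilon$-differential privacy bound restricted to product events.

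To promote the bound from product rectangles to arbitrary measurable subsets $O$ of the joint range, I would decompose $O$ into a disjoint (at most countable) union of measurable rectangles (trivial in the discrete case, a standard monotone-class or Carath\'eodory extension in the general case), apply the previous step to each rectangle, and sum. Since the inequality is linear in the underlying probability measure with the same multiplicative constant $e^{\epsilon}$, the bound is preserved under this summation, establishing $\epsilon$-differential privacy of the composed mechanism as required.

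The main obstacle I anticipate is the final extension step: the product-event argument is essentially algebraic, but verifying that the bound passes to all measurable events in a continuous range requires a careful measure-theoretic argument (or, equivalently, a density-ratio formulation if one assumes the outputs admit densities with respect to a common dominating measure). Everything else is a routine application of independence and the per-mechanism DP definition, so this measurability reduction is the only place where a reader might want more detail.
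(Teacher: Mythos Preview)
The paper does not actually prove this theorem: it is stated in the preliminaries with a citation to McSherry~\cite{McSherry09SIGMOD} and is used as a black box later in Theorem~\ref{thr:output-perturbation}. So there is no ``paper's own proof'' to compare against.

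Your argument is the standard one and is correct in outline. One small caution on the extension step: an arbitrary measurable subset of a product space is \emph{not} in general a countable disjoint union of measurable rectangles, so the phrase ``decompose $O$ into a disjoint (at most countable) union of measurable rectangles'' is only literally valid in the discrete setting you flag. For continuous outputs the cleaner route is the one you allude to parenthetically: pass to densities (or Radon--Nikodym derivatives) with respect to a dominating product measure, bound the pointwise ratio $\prod_i p_i^{S}(o_i)/\prod_i p_i^{S'}(o_i)\le e^{\sum_i\epsilon_i}$, and integrate over $O$. That avoids the rectangle decomposition entirely and handles all measurable $O$ at once. With that adjustment your proof is complete and matches the usual textbook proof of sequential composition.
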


\begin{theorem}[Parallel Composition]\label{thr:parallel-composition}
Given $n$ independent randomized algorithms
$\mathcal{A}_{1},\mathcal{A}_{2},...,\mathcal{A}_{n}$ where
$\mathcal{A}_{i} (1 \leq i  \leq n)$
satisfies $\epsilon_{i}$-differential privacy,
a sequence of $\mathcal{A}_{i}$
over a set of disjoint datasets $S_{i}$
satisfies $max(\epsilon_{i})$-differential privacy.
\end{theorem}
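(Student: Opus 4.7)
The plan is to reduce the parallel composition bound to the single-mechanism differential privacy guarantee by exploiting the disjointness of the underlying datasets. I will view the composed mechanism as $\mathcal{A}(S) = (\mathcal{A}_1(S_1), \mathcal{A}_2(S_2), \dots, \mathcal{A}_n(S_n))$, where $S = \bigcup_i S_i$ with the $S_i$ pairwise disjoint. Given neighboring overall inputs $S$ and $S'$ differing in at most one element $x$, I would first localize the change: because the $S_i$ are disjoint, $x$ can belong to at most one partition, say $S_j$. Consequently $S_i = S'_i$ for every $i \neq j$, and only the $j$-th component sees a nontrivial neighboring pair $(S_j, S'_j)$.

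Next, I would exploit independence of the mechanisms. For any measurable event $O \subseteq \prod_i \mathrm{Range}(\mathcal{A}_i)$, I can (by standard arguments) reduce to product events $O = \prod_i O_i$ so that
\begin{equation*}
\Pr[\mathcal{A}(S) \in O] = \prod_{i=1}^{n} \Pr[\mathcal{A}_i(S_i) \in O_i],
\end{equation*}
and similarly for $S'$. Forming the ratio, every factor with $i \neq j$ cancels exactly because $S_i = S'_i$, leaving only the $j$-th ratio, which is bounded by $e^{\epsilon_j}$ by the $\epsilon_j$-differential privacy of $\mathcal{A}_j$. Hence
\begin{equation*}
\Pr[\mathcal{A}(S) \in O] \leq e^{\epsilon_j} \cdot \Pr[\mathcal{A}(S') \in O] \leq e^{\max_i \epsilon_i} \cdot \Pr[\mathcal{A}(S') \in O],
\end{equation*}
which is exactly the claimed $\max(\epsilon_i)$-differential privacy. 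The edge case where $x$ lies in no $S_i$ is even easier, since then $S_i = S'_i$ for all $i$ and the two output distributions coincide.

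The main obstacle, such as it is, will be the bookkeeping around arbitrary (non-product) measurable output sets $O$ rather than any deep probabilistic content: a standard monotone class / product-measure argument will let me pass from product events to general events. I would also want to state explicitly that ``independent'' here means the internal random coins of the $\mathcal{A}_i$ are mutually independent, which is what licenses the factorization above. Once these formal points are in place, the cancellation of untouched factors carries the entire proof.
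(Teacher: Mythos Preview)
Your argument is the standard and correct proof of parallel composition: localize the single changed element to at most one block $S_j$ by disjointness, factorize the output distribution by independence of the mechanisms' randomness, cancel all unaffected factors, and bound the remaining ratio by $e^{\epsilon_j} \le e^{\max_i \epsilon_i}$. The only point worth tightening is the passage from product events to arbitrary measurable $O$; in the discrete setting that suffices here one can simply sum the pointwise bound over atoms, so the monotone-class machinery you mention is not really needed.

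As for comparison with the paper: there is nothing to compare. The paper does not prove this theorem at all; it is stated in the preliminaries as a known composition property and attributed to McSherry~\cite{McSherry09SIGMOD}. Your proposal therefore supplies a proof where the paper simply cites one, and your approach matches the original argument in that reference.
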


\subsubsection{Randomized Response}\label{sec-rr}

\emph{Randomized Response} is a commonly used survey technique
in statistics~\cite{Warner65}.
When a respondent is asked a sensitive question
for which the answer can be either \texttt{yes} or \texttt{no},
he has the opportunity to answer the question with plausible deniability.
To do so,
the respondent flips a biased coin before answering the question.
If the coin turns head with a probability $p$,
he gives his true answer;
otherwise,
he reports the opposite of the true answer.
It has pointed out that
\emph{Randomized Response} can
be regarded as a specific randomized algorithm
that satisfies the $\epsilon$-differential privacy,
if the coin flipping probability $p$ of the algorithm has
the following relationship with the privacy budget $\epsilon$
~\cite{ErlingssonPK14CCS,BassilyS15STOC}:

\begin{equation}
p = \frac{e^{\epsilon}}{1 + e^{\epsilon}}.
\end{equation}

\subsection{Related Works}

\subsubsection{Differentially Private Similarity Computation}

For the applications
such as recommender system~\cite{BartheDGKB13CSF, WongH14SWJ},
several works have been proposed to address
the potential privacy issues in two-party profiles computation~
\cite{AlagganGK11OPODIS, BoutetFGJK16Springer},
in user profiles collection~\cite{ShenCJ16ECML, ShenJ16CCS}
and in the profile related data releasing~\cite{AlagganGK12SSS, ZhuLZXY14PAKDD, ZhuLZXY16KAIS}
by the third party.

Most of these works were focused on
the distributed environments in which
the involved users are semi-honest while
the third party (if it existed) is assumed as semi-trusted or even untrusted.
Therefore,
users profile must be perturbed or encrypted before
being sent to other users or the third party
for further processing such as similarity computation,
an essential component in collaborative filtering.
In addition,
sometimes the released similarity also needs to be perturbed.

For achieving the differentially private similarity computation
by output perturbation,
the line of research was pioneered by Alaggan~{et~al.}~\cite{AlagganGK11OPODIS},
who introduced the \emph{Laplace} mechanism into
the \emph{Scalar Product} and \emph{Cosine} similarity computation.
In particular,
their proposed secure protocols were partially based on
\emph{Homomorphic Encryption} and worked by
adding the \emph{Laplacian} noise to the similarity.
Following a similar strategy,
Wong~{et~al.}~\cite{WongH14SWJ} presented a secure protocol
for a specific Jaccard similarity computation of binary data.

For profile perturbation,
Alaggan~{et~al.}~\cite{AlagganGK12SSS} considered
the scenario of profile release and proposed the BLIP mechanism
in which the Bloom filter of user profile
would be distorted by \emph{Randomized Response}
before being released to the public.
The \emph{Scalar Product} and \emph{Cosine} similarity
were considered in this work.
Barthe~{et~al.}~\cite{BartheDGKB13CSF} proposed
a two-party protocol for computing Hamming distance
between bit-vectors via \emph{Homomorphic Encryption} and the \emph{Laplace} mechanism.
Boutet~{et~al.}~\cite{BoutetFGJK16Springer} designed
an obfuscation protocol and a randomized dissemination protocol
for two-party Jaccard similarity computation.
Besides,
existing works focused also on
the perturbation of user profiles for
dataset release and for multi-level privacy needs
instead of specific similarity computation needs.
Zhu~{et~al.}~\cite{ZhuLZXY14PAKDD, ZhuLZXY16KAIS}
considered the privacy issues in releasing
and sharing of tagging datasets
in tagging recommender systems and
presented the private tagging release algorithm
\emph{PriTop} based on the topic generation model,
on the \emph{Laplace} mechanism and on the \emph{Exponential} mechanism.
Shen~{et~al.}~\cite{ShenCJ16ECML, ShenJ16CCS}
aimed to achieve multi-level privacy control
in user profile perturbation and
proposed the \emph{DP-MultiUPP} and \emph{EpicRec} frameworks
based on the \emph{Laplace} mechanism and optimization techniques.

Table~\ref{tab:dpsc-comparison} gives the comparison
among the existing works
for differentially private similarity computation.
The main details of our proposed \texttt{PrivMin} algorithm
are also listed in the table,
and the differences between our work and
the existing works will be discussed in the next section.

\begin{table}[h]  \centering
\scriptsize
  \caption{Differentially Private Similarity Computation Comparison}\label{tab:dpsc-comparison}
  \begin{tabular}{l|c|c|c|c}
\hline
    \rowcolor[HTML]{EFEFEF}
    \textbf{Related Work} & \textbf{Third Party Setting} & \textbf{Similarity Type} & \textbf{Perturbation Approach} & \textbf{Involved Method}\\
\hline
    Alaggan~{et~al.}~\cite{AlagganGK11OPODIS}
    & \tabincell{c}{None or \\Semi-trusted}
    & \tabincell{c}{Scalar Product, \\Cosine Similarity}
    & Output Perturbation
    & \tabincell{c}{Homomorphic Encryption, \\Laplace Mechanism}  \\ \hline

    Wong~{et~al.}~\cite{WongH14SWJ}
    & Semi-trusted
    & Jaccard Similarity
    & Output Perturbation
    & \tabincell{c}{Homomorphic Encryption, \\Laplace Mechanism}  \\ \hline

    Barthe~{et~al.}~\cite{BartheDGKB13CSF}
    & None
    & Hamming Distance
    & Profile Perturbation
    & \tabincell{c}{Homomorphic Encryption, \\Laplace Mechanism}  \\ \hline

    Alaggan~{et~al.}~\cite{AlagganGK12SSS}
    & None
    & \tabincell{c}{Scalar Product, \\Cosine Similarity}
    & Profile Perturbation
    & \tabincell{c}{Bloom Filter, \\Randomized Response}  \\ \hline

    Boutet~{et~al.}~\cite{BoutetFGJK16Springer}
    & None
    & Jaccard Similarity
    & Profile Perturbation
    & \tabincell{c}{Compact profile construction, \\Randomized Response}  \\ \hline

    \texttt{PrivMin}
    & Trusted
    & Jaccard Similarity
    & \tabincell{c}{Profile Perturbation}
    & \tabincell{c}{MinHash Signature, \\Exponential Mechanism, \\Randomized Response}  \\ \hline
\hline
\end{tabular}
\end{table}

\subsubsection{Discussion}

Based on the above analysis,
the differences between our work and the existing works
can be concluded in three aspects:
\begin{itemize}
  \item
Firstly,
compared to the distributed setting of the existing works,
our work is focused on the centralized setting.
Moreover,
this work mainly assumes that the third party is trusted
while the existing works generally assumed
a semi-trusted third party or no third party at all.
The main reason of such assumptions in our work is
that in many real-world applications
of recommendation and plagiarism detection,
the service providers always have access to users profiles
and then use their storage capacity and computing ability
to provide users with varieties of services.
Even so,
the proposed \texttt{PrivMin} algorithm can also be
extended for the untrusted third party scenario,
since it has the ability to release perturbed users profiles
before entering the similarity computation phase.

\item
Secondly,
for the research of
differentially private Jaccard similarity computation,
the existing work such as~\cite{WongH14SWJ} was
partially focused on applying the \emph{Laplace} mechanism to
the original Jaccard similarity computation equation,
which cannot maintain a high utility of the released similarity.
Besides,
due to its assumption of binary input data,
the current work failed to meet the privacy needs
in the application scenario as shown in Example~\ref{exp:attack}.
Moreover,
few attention has been devoted to the relationship
between MinHash-based Jaccard similarity computation and
differential privacy,
which is the basic rationale and main contribution of our work.

\item
Thirdly,
in order to adopt the \emph{Randomized Response} technique,
the existing works (e.g.,~\cite{AlagganGK12SSS, BoutetFGJK16Springer})
were focused on directly distorting user profiles
which are represented as binary expressions.
However,
if relying on some specific value computation process such as MinHashing,
we find that the combination of
\emph{Randomized Response} technique
with \emph{Exponential} mechanism could provide the possibility
to design a differentially private algorithm
with acceptable utility.
To the best of our knowledge,
this is the first attempt to
incorporate the \emph{Exponential} mechanism with
the \emph{Randomized Response} in the context of
differentially private Jaccard similarity computation.
\end{itemize}

\subsection{Summary}

For differentially private similarity computation,
the existing works have established
two perturbation strategies to address relevant issues,
and provided referential ideas and solutions
for differentially private algorithm design.
However,
currently there has been limited research attention in
\emph{the MinHash-based Jaccard Similarity Computation}
(\texttt{MH-JSC})
to design the differentially private
Jaccard similarity computation algorithm.
Based on the observation that
the internal randomness within \texttt{MH-JSC}
is related with a relaxed differential privacy,
this paper aims to address the following specific research issues:
\begin{itemize}
\item
How to measure the randomness within \texttt{MH-JSC}?

\item
How to achieve strict differential privacy in \texttt{MH-JSC}?
\end{itemize}

\section{Problem Statement}\label{sec-problem_statement}

This section first introduces the \emph{system and threat model}
considered in this work,
and then clearly presents
the differentially private Jaccard similarity computation problem,
along with its challenges.

\subsection{System and Threat Model}

Since in many real-life scenarios
users are expected to provide  their true data to the cloud platform
in order to access add-on services
such as accurate recommendation,
we assume that
the cloud platform will not collude with any user and is trusted.
Meanwhile,
the platform users are supposed to be semi-honest,
namely
they are willing to provide their own data to the cloud
but also curious about other users' sensitive information.

Fig.~\ref{fig:sysmodel} shows the system and treat model.

\begin{figure}[htbp]
\centering
\includegraphics[scale=0.5]{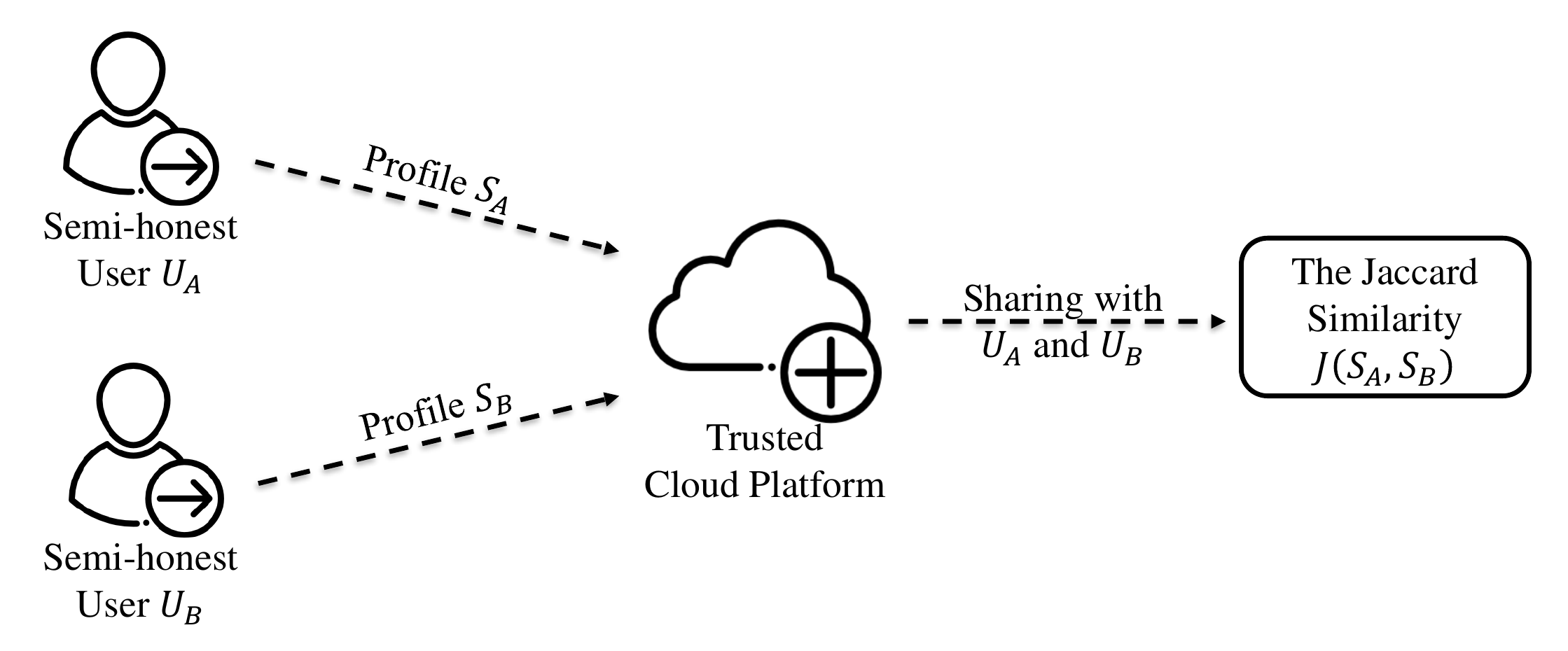}
\caption{System and Threat Model}
\label{fig:sysmodel}
\end{figure}

\subsection{Problem Definition}

In this work,
we are addressing the problem of
\emph{differentially private Jaccard similarity computation},
which can be described as follows.

Assume two users $U_{A}$ and $U_{B}$,
each of them respectively maintains the profile $S_{A}$ and $S_{B}$
on the trusted cloud platform.
Given their profiles $S_{A}$ and $S_{B}$ and
the privacy budget $\epsilon$,
the cloud platform is expected to
calculate the Jaccard similarity $J(S_{A},S_{B})$
and shares a perturbed version with these two users.
On one hand,
the overall similarity computation mechanism should
satisfy $\epsilon$-differential privacy
such that no semi-honest users can infer
the present or absent status of the elements
in other user profiles based on the shared similarity.
On the other hand,
the shared perturbed similarity should also maintain
acceptable utility for further data analysis or value-adding services.

\subsection{Research Issues and Challenges}

In this paper,
we aim to solve
the differentially private Jaccard similarity
computation problem
by leveraging the MinHash and \texttt{MH-JSC}.
However,
directly introducing differential privacy into the \texttt{MH-JSC}
brings up two major challenges.
\begin{description}
\item[How to measure the randomness within \texttt{MH-JSC}?]
As introduced in Section~\ref{sec-preliminaries},
the MinHash-based Jaccard Similarity Computation
can estimate the Jaccard Similarity
with an expected error $\theta$.
It seems that this error can be regarded as
a kind of internal randomized noise,
which makes it possible for \texttt{MH-JSC} to achieve differential privacy.
If the above hypothesis is proved right,
it is not necessary to add any extra external noise to \texttt{MH-JSC}
since the internal noise could be enough.

In Section~\ref{sec-dpso},
we will define $\epsilon$-DPSO and
conditional $\epsilon$-DPSO,
to analyze the relationship between
the randomness within the \texttt{MH-JSC} and differential privacy.

\item[How to achieve strict differential privacy in \texttt{MH-JSC}?]
Based on the randomness analysis
in Section~\ref{sec-dpso},
we will show that \texttt{MH-JSC} only satisfies
a relaxation of strict differential privacy,
the conditional $\epsilon$-DPSO.
For achieving strict differential privacy,
although the \emph{Laplace} mechanism can be applied to
perturbed the original MinHash-based Jaccard similarity,
the final utility of similarity will be distorted in a large extent.
In Section~\ref{sec-private-jaccard-similarity-computation}
we will adopt the \emph{Profile Perturbation} approach and
propose two private operations
to constitute the \texttt{PrivMin} algorithm,
along with the relevant privacy analysis
in Section~\ref{sec-analysis}.
This algorithm also exploits
the minimum hash value computation process
within the MinHashing phase.
In the meanwhile,
the \emph{Exponential} mechanism and \emph{Randomized Response}
will be carefully adopted for maintaining an acceptable utility.
\end{description}

\section{Randomness Analysis within MH-JSC}\label{sec-dpso}

In this section,
in order to study the relationship between the internal randomness
within the \texttt{MH-JSC} and the differential privacy,
we first provide a relaxed definition of
differential privacy ($\epsilon$-DPSO) for set operations,
and then prove that
the \texttt{MH-JSC} satisfies the $\epsilon$-DPSO.

\subsection{Differentially Private Set Operations}

In the definition of differential privacy~\ref{def:DP}
the neighbouring datasets are $S$ and $S'$
which differ in one element,
while the algorithm $\mathcal{M}$ is randomized
with its non-deterministic output $\mathcal{M}(S)$
which belongs to $Range(\mathcal{M})$.
In what follows,
we relax this definition for set operations.

For a data set pair $\{S_{A},S_{B}\}$
that consists of two data sets $S_{A}$ and $S_{B}$,
its neighboring data set pair $\{S_{A},S_{B}\}'$
is defined as either
$\{S_{A}',S_{B}\}$ or $\{S_{A},S_{B}'\}$,
where $S_{A}$ differs in one element with $S_{A}'$,
and $S_{B}$ differs in one element with $S_{B}'$.
The randomized algorithm $\mathcal{\ddot{M}}$
is a set operation process
with a nondeterministic output in the range
$Range(\mathcal{\ddot{M}})$.
Based on the above setting,
the \emph{Differentially Private Set Operations}
($\epsilon$-DPSO)
is formally defined as

\begin{definition}[$\epsilon$-DPSO]\label{def:DPSO}
A randomized set operation algorithm $\mathcal{\ddot{M}}$
gives $\epsilon$-differential privacy
if for all neighbouring data set pairs $\{S_{A},S_{B}\}$ and
$\{S_{A},S_{B}\}'$ differing on at most one element,
and all $O \subseteq Range(\mathcal{\ddot{M}})$,
we have
\begin{center}
$Pr[\mathcal{\ddot{M}}(\{S_{A},S_{B}\}) \in O] \leq e^{\epsilon}
\cdot Pr[\mathcal{\ddot{M}}(\{S_{A},S_{B}\}') \in O]$.
\end{center}
\end{definition}

Next,
for a randomized set operation algorithm $\mathcal{\ddot{M}}$,
we observe that although all its possible outputs
belong to $O$,
there maybe exist a narrower outputs set $O_{\sigma}$ that
includes the most possible outputs of the algorithm.
For example,
in the \texttt{MH-JSC},
the probability for its output $J_{mh}(S_{A},S_{B})$ to
be in the range $[\sigma-\theta, \sigma+\theta]$
could be relatively high if given appropriate parameters,
as shown in Eq.~\ref{equ:approx.prob}.
Therefore,
as a condition,
if we only focus on the most possible outputs $O_{\sigma}$
instead of all possible outputs $O$ of a
randomized set operation algorithm $\mathcal{\ddot{M}}$,
the \emph{Conditional} $\epsilon$-DPSO can be further defined as

\begin{definition}[Conditional $\epsilon$-DPSO]\label{def:con-DPSO}
A randomized set operation algorithm $\mathcal{\ddot{M}}$
gives conditional $\epsilon$-differential privacy
if for all neighbouring data set pairs
$\{S_{A},S_{B}\}$ and $\{S_{A},S_{B}\}'$
differing on at most one element,
and for the most possible outputs
$O_{\sigma} \subseteq Range(\mathcal{\ddot{M}})$,
\begin{center}
$Pr[\mathcal{\ddot{M}}(\{S_{A},S_{B}\}) \in O_{\sigma}] \leq e^{\epsilon}
\cdot Pr[\mathcal{\ddot{M}}(\{S_{A},S_{B}\}') \in O_{\sigma}]$.
\end{center}
\end{definition}

\subsection{Privacy Analysis of MH-JSC}

Here,
we will show that the \texttt{MH-JSC} satisfies
the conditional $\epsilon$-DPSO:

\begin{theorem}
The \texttt{MH-JSC} satisfies the conditional $\epsilon$-DPSO.
\end{theorem}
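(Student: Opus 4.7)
The plan is to take $O_\sigma$ to be the interval $[\sigma-\theta,\sigma+\theta]$ of most-probable outputs, so that by construction $\Pr[\mathcal{\ddot{M}}(\{S_A,S_B\}) \in O_\sigma] = p(K,\sigma,\theta)$ as given by Eq.~(\ref{equ:approx.prob}). The whole argument then reduces to bounding the ratio of this quantity to the analogous probability computed on the neighboring pair $\{S_A,S_B\}'$.

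First, I would establish a sensitivity-style lemma for the Jaccard similarity itself: if $\{S_A,S_B\}'$ differs from $\{S_A,S_B\}$ in exactly one element, then $\sigma':=J(\{S_A,S_B\}')$ satisfies $|\sigma-\sigma'|\leq 1/|S_A\cup S_B|$, by a short case analysis on whether the changed element is being added to (or removed from) the intersection or the symmetric difference of the two sets. Because for each hash function $h_k$ the event $\min\{h_k(S_A)\}=\min\{h_k(S_B)\}$ occurs independently with probability exactly equal to the Jaccard similarity, the $K$ hashes yield $K\cdot J_{mh}\sim\mathrm{Binomial}(K,\sigma')$ under the neighbor pair, so
\[
\Pr[\mathcal{\ddot{M}}(\{S_A,S_B\}') \in O_\sigma] \;=\; \sum_{K(\sigma-\theta)\leq t \leq K(\sigma+\theta)}\binom{K}{t}(\sigma')^t(1-\sigma')^{K-t}.
\]

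Next, I would bound the ratio of the two probabilities termwise. For each $t$ in the common summation range,
\[
\frac{\binom{K}{t}\sigma^t(1-\sigma)^{K-t}}{\binom{K}{t}(\sigma')^t(1-\sigma')^{K-t}} \;=\; \Bigl(\frac{\sigma}{\sigma'}\Bigr)^t\Bigl(\frac{1-\sigma}{1-\sigma'}\Bigr)^{K-t},
\]
and plugging in $|\sigma-\sigma'|\leq 1/|S_A\cup S_B|$ lets one take logarithms and maximize over $t\in[K(\sigma-\theta),K(\sigma+\theta)]$ to obtain an explicit exponential bound. The passage from the termwise bound to the ratio of sums is immediate via the elementary ``maximum of ratios dominates ratio of sums'' inequality, which delivers the required factor $e^\epsilon$ with $\epsilon$ expressible in terms of $K$, $\sigma$, $\theta$, and the union size.

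The main obstacle I anticipate is the edge-case behavior when $\sigma$ or $1-\sigma$ is close to $0$: there the per-term ratios $\sigma/\sigma'$ and $(1-\sigma)/(1-\sigma')$ can blow up even though $|\sigma-\sigma'|$ is tiny, which will force $\epsilon$ to absorb a factor depending on $\min(\sigma,1-\sigma)$. A second subtlety is that $O_\sigma$ is anchored at the original $\sigma$ rather than at $\sigma'$, so under the neighbor pair the summation window is no longer centered at the binomial mean; this requires a short Chernoff-style tail estimate (using $K\theta^2\gg 1$) to show that the probability mass shifted out of the window is negligible compared to $p(K,\sigma,\theta)$. Once these two technicalities are handled, the termwise multiplicative bound immediately yields the claimed conditional $\epsilon$-DPSO guarantee.
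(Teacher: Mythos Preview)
Your proposal is sound and would yield the claim, but it takes a markedly different route from the paper. The paper does not attempt a termwise bound on the binomial ratio at all. Instead, after establishing the same sensitivity fact $|\sigma-\sigma'|\le 1/n$ (under the extra simplifying assumption $|S_A|=|S_A'|=|S_B|=|S_B'|=n$, which you do not make), it observes that the original window $[\sigma-\theta,\sigma+\theta]$ is contained in the recentered and widened window $[\sigma'-(\theta+1/n),\sigma'+(\theta+1/n)]$, rewrites the denominator as $p(K,\sigma',\theta+1/n)$ via Eq.~\eqref{equ:approx.prob}, and then simply \emph{defines} $\epsilon=\ln\bigl(p(K,\sigma,\theta)/p(K,\sigma',\theta+1/n)\bigr)$. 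No explicit control of $\epsilon$ in terms of $K$, $\sigma$, $\theta$ is attempted; the theorem is treated as an existence statement.

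What each approach buys: the paper's interval-containment trick is a one-line maneuver that sidesteps all the edge-case analysis you flag, at the cost of producing an $\epsilon$ that is opaque and data-dependent. Your termwise strategy is more work but would deliver an explicit, interpretable $\epsilon$ depending on $K$, $\theta$, $\min(\sigma,1-\sigma)$, and $|S_A\cup S_B|$. One minor over-engineering on your side: the Chernoff tail step you anticipate is not actually needed for the ratio bound itself, since both numerator and denominator are sums over the \emph{same} index set $O_\sigma$ and the ``maximum-of-ratios'' inequality applies directly regardless of where the neighbor's binomial is centered; the off-centering only affects the \emph{size} of $\epsilon$, not the validity of the bound.
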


\begin{proof}
Assume that $\sigma$ and $\sigma'$ are the value of
$J(S_{A},S_{B})$ and $J(S_{A},S_{B})'$,
respectively,
and all sets $S_{A}$, $S_{A}'$, $S_{B}$ and $S_{B}'$
have the same size,
that is,
$|S_{A}|=|S_{A}'|=|S_{B}|=|S_{B}'|=n$,
according to equation~\ref{equ:Jsim},
we have
\begin{dgroup*}
\begin{dmath*}
J(S_{A},S_{B})=\frac{|S_{A} \cap S_{B}|}{|S_{A} \cup S_{B}|}=\sigma
\end{dmath*}
\begin{dsuspend}
and
\end{dsuspend}
\begin{dmath*}
J(S_{A}',S_{B})=\frac{|S_{A}' \cap S_{B}|}{|S_{A}' \cup S_{B}|}=\sigma'
\end{dmath*}
\begin{dsuspend}
or
\end{dsuspend}
\begin{dmath*}
J(S_{A},S_{B}')=\frac{|S_{A} \cap S_{B}'|}{|S_{A} \cup S_{B}'|}=\sigma'
\end{dmath*}.
\end{dgroup*}

Then we can have $|\sigma-\sigma'|\leq \frac{1}{n}$,
because the maximum change of the numerator between $\sigma$ and $\sigma'$
is $1$ and the minimum of the denominator between $\sigma$ and $\sigma'$ is $n$.

According to Eq.~\eqref{equ:approx.prob} and above conclusion,
we have

\begin{dmath*}
\frac{Pr[J_{mh}({S_{A},S_{B}}) \in O_{\sigma}]}
{Pr[J_{mh}({S_{A},S_{B}}') \in O_{\sigma}]}
= \frac{Pr[J_{mh}({S_{A},S_{B}}) \in [\sigma-\theta,\sigma+\theta]}
{Pr[J_{mh}({S_{A},S_{B}}') \in [\sigma-\theta,\sigma+\theta]]}
\leq \frac{Pr[J_{mh}({S_{A},S_{B}}) \in [\sigma-\theta,\sigma+\theta]}
{Pr[J_{mh}({S_{A},S_{B}}') \in [\sigma'-(\theta+\frac{1}{n}),\sigma'+(\theta+\frac{1}{n})]}
= \frac{p(K,\sigma,\theta)}{p(K,\sigma',\theta+\frac{1}{n})}
= \frac{\sum_{K(\sigma-\theta) \leq t \leq K(\sigma+\theta)}
\binom{K}{t}(\sigma)^{t}(1-\sigma)^{K-t}}{\sum_{K(\sigma'-\theta-\frac{1}{n})
\leq t' \leq K(\sigma'+\theta+\frac{1}{n})}
\binom{K}{t'}(\sigma')^{t'}(1-\sigma')^{K-t'}}
= e^{\epsilon}.
\end{dmath*}

Based on above formula derivations,
we can conclude that
if we use the $O_{\sigma}$ instead of $O$
to represent the most possible outputs of $\mathcal{\ddot{M}}$,
the computation process of
\emph{MinHash-based Jaccard Similarity} satisfies
the conditional $\epsilon$-DPSO with
$\epsilon = ln(\frac{p(K,\sigma,\theta)}{p(K,\sigma',\theta+\frac{1}{n})})$.
\end{proof}

Since the above privacy property is based on
the observation of a particular subset of
the output space of \texttt{MH-JSC},
the \texttt{MH-JSC} still cannot achieve
the strict differential privacy in which
the privacy property should be maintained across all the output space.
That is to say,
the randomness within the \texttt{MH-JSC}
can only lead to a limited indistinguishability of its outputs,
and external noise is still required for \texttt{MH-JSC}
to achieve the $\epsilon$-differential privacy,
as shown in Section~\ref{sec-private-jaccard-similarity-computation}.

\section{Private Jaccard Similarity Computation}
\label{sec-private-jaccard-similarity-computation}

In this section,
we propose a \emph{\textbf{Priv}ate \textbf{Min}Hash-based
Jaccard Similarity Computation} (\texttt{PrivMin}) algorithm to
achieve the strict differential privacy in \texttt{MH-JSC}.

\subsection{Algorithm Overview}

The $\texttt{PrivMin}$ algorithm aims to
release the MinHash-based Jaccard similarity
between any two cloud users
by the \emph{Profile Perturbation} approach,
which ensures that each user's private information can be
protected from the passive attack similar to the one in Example~\ref{exp:attack}.
That is,
based on observation of the released similarity,
potential adversaries cannot re-identify
the elements in the original user profiles.
The rationale for $\texttt{PrivMin}$ algorithm
is shown in Fig.~\ref{fig:algorithm-rationale}.

\begin{figure}[htbp]
\centering
\includegraphics[scale=0.5]{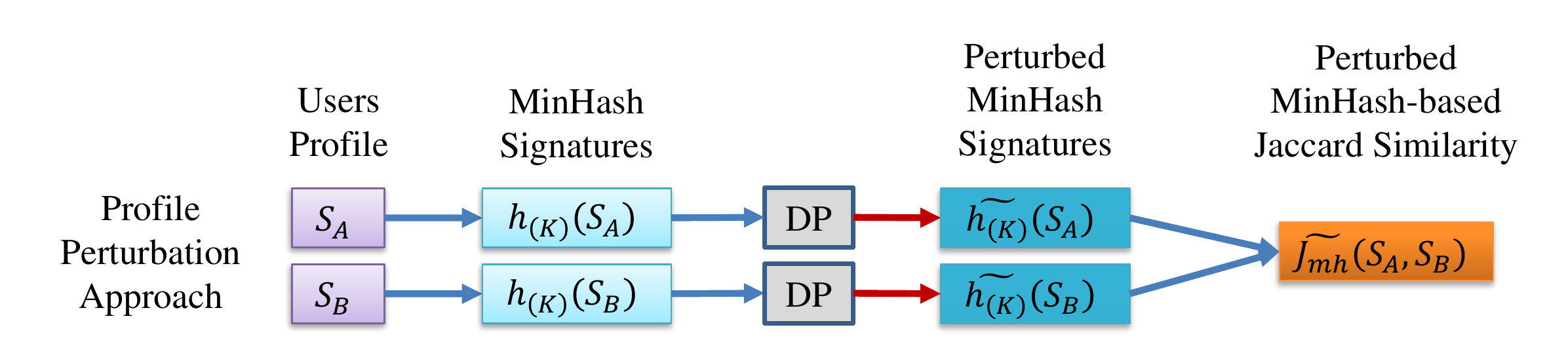}
\caption{Rationale for \texttt{PrivMin} Algorithm}
\label{fig:algorithm-rationale}
\end{figure}

We examine the minimum hash value computation process
within the MinHashing phase,
and add the \emph{Exponential} noise to
the original MinHash signatures
through leveraging the \emph{Randomized Response} strategy.
From the perturbed MinHash signatures,
the adversary cannot infer the sensitive information
within the users' input profiles.
Specifically,
we conceptualize the $\texttt{PrivMin}$ algorithm
into two private operations:
\begin{description}
\item[Private MinHash Value Generation]
Based on the \emph{Exponential} mechanism,
this operation privately selects the minimum hash value
in each step within the MinHashing phase.
By default this operation will be executed in all the $K$ steps
and then the perturbed MinHash signature will be generated.

\item[Randomized MinHashing Steps Selection]
In the generation of perturbed MinHash signature,
this operation privately shrinks the number $K$ into $m$
by the \emph{Randomized Response} technique,
so that the total added \emph{Exponential} noise
is tightly controlled.

\end{description}

Details for the \emph{Private MinHash Value Generation}
is presented in Section~\ref{sec-private-minhash-signature-generation},
followed by the \emph{Randomized MinHashing Steps Selection}
in Section~\ref{sec-randomized-minhashing-steps-selection}.

\subsection{Private MinHash Value Generation}
\label{sec-private-minhash-signature-generation}

In this operation,
we attempt to add the \emph{Exponential} noise
through the minimum hash value computation process
within the MinHashing phase,
and then generate the perturbed MinHash signatures
$\widetilde{h_{(K)}}(S_{A})$ and $\widetilde{h_{(K)}}(S_{B})$,
for the profiles $S_{A}$ and $S_{B}$.
The intuition behind this operation is that
we aim to add just enough noise
by leveraging the internal noise in \texttt{MH-JSC}.
In this way,
by using the perturbed MinHash signatures,
the final similarity would also be a noisy version
from which the semi-honest users
cannot successfully launch a passive attack.

More specifically,
as proved in Section~\ref{sec-dpso},
the \texttt{MH-JSC} satisfies the \emph{Conditional $\epsilon$-DPSO}
because of its internal randomness.
Herein,
we first show that
the MinHashing phase produces such randomness and it
only satisfies the differential privacy in certain situations.

\begin{lemma}
\label{lem:privacy_minhash}
The MinHashing phase only satisfies
the $\epsilon$-differential privacy
at certain situations in which
the element difference between $S$ and $S'$
has an impact on the equality of their minimum hash value.
\end{lemma}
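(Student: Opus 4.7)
The plan is to trace the sources of non-determinism in the MinHashing phase and to isolate the situations in which a single-element change can alter the output distribution. I would fix a hash function $h_k$ used in one of the $K$ MinHashing steps, denote by $s_\Delta$ the single element that distinguishes $S$ from $S'$, and compare $min\{h_k(S)\}$ with $min\{h_k(S')\}$. Since $h_k$ is deterministic once chosen, these two minima can only disagree when $s_\Delta$ happens to attain the per-step minimum in one of the two sets.

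The first case is where $s_\Delta$ is not the arg-minimum under $h_k$ in either set. Then $min\{h_k(S)\} = min\{h_k(S')\}$, so the $k$-th signature coordinate is literally the same value for both inputs, the output distributions coincide, and Definition~\ref{def:DP} is satisfied trivially with any $\epsilon \geq 0$. The second case is where $s_\Delta$ is the arg-minimum in exactly one of the two sets, say $S$; then $min\{h_k(S)\} \neq min\{h_k(S')\}$, and taking the singleton $O = \{min\{h_k(S)\}\}$ yields $Pr[\mathcal{M}(S) \in O] = 1$ while $Pr[\mathcal{M}(S') \in O] = 0$, so no finite $\epsilon$ can bound the ratio required by Definition~\ref{def:DP}. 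Taken together, these two cases isolate exactly the situations in which the MinHashing phase is $\epsilon$-DP: namely those where the single-element swap does not disturb the equality of the per-coordinate minima.

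To lift the per-hash argument to the full $K$-coordinate signature $h_{(K)}(\cdot)$, I would iterate the case analysis over $k = 1, \dots, K$ and appeal to sequential composition (Theorem~\ref{thr:sequential-composition}): as soon as any single coordinate falls into the second case, the whole signature deterministically distinguishes $S$ from $S'$ and strict $\epsilon$-DP collapses. This failure is exactly what the external \emph{Exponential} noise introduced in Section~\ref{sec-private-minhash-signature-generation} is designed to patch, and it is what prevents the conditional $\epsilon$-DPSO of Section~\ref{sec-dpso} from being strengthened into the full $\epsilon$-DP of Definition~\ref{def:DP}.

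The main obstacle I anticipate is stating the lemma's conclusion unambiguously. The case where the element difference ``has an impact on the equality'' of the minima must be carefully distinguished from the case where it does not, because only the latter admits any finite privacy budget, and the lifting from a single hash function to the $K$-dimensional signature through composition should be handled explicitly rather than hand-waved, since a single ``bad'' coordinate is enough to destroy the guarantee for the whole vector.
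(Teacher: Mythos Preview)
Your proposal is correct and follows the same approach as the paper: a case split on whether the single differing element alters the per-hash minimum, concluding trivial $\epsilon=0$ privacy when $min\{h_k(S)\}=min\{h_k(S')\}$ and no finite $\epsilon$ when they differ. Your explicit singleton witness $O=\{min\{h_k(S)\}\}$ for the failing case and the lift to the full $K$-coordinate signature via composition are natural refinements that the paper leaves implicit, but the underlying argument is identical.
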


\begin{proof}
Following the steps described in Section~\ref{sec-minhash},
in the MinHashing phase with $K$ hash functions,
when given neighbour profiles $S$ and $S'$
which differ only in one element,
their hash value sets for each hash function $k$ are
generated and denoted as $h_{k}(S)$ and $h_{k}(S')$.
Next,
the minimum hash values $min\{h_{k}(S)\}$
and $min\{h_{k}(S')\}$ are selected for
further construction of the MinHash signatures
$h_{(K)}(S)$ and $h_{(K)}(S')$.

If the value of element that differentiates
$S$ from $S'$ has no impact on the equality
of their minimum hash value under a hash function $k$,
that is,
$min\{h_{k}(S)\} = min\{h_{k}(S')\}$,
we have
\begin{equation*}
Pr[min\{h_{k}(S)\} \in O] =
Pr[min\{h_{k}(S')\} \in O]
for O \subseteq Range(min\{\cdot\}),
\end{equation*}
which satisfies the $\epsilon$-differential privacy
where $\epsilon = 0$,
as shown in Definition~\ref{def:DP}.

However,
if the value of element that differentiates
$S$ from $S'$ does have an impact on the equality
of their minimum hash value under a hash function $k$,
that is,
$min\{h_{k}(S)\} \neq min\{h_{k}(S')\}$,
the MinHashing phase cannot guarantee
the $\epsilon$-differential privacy in any degree.
\end{proof}

Based on the above result,
we design the \emph{Private MinHash Value Generation} algorithm
by adopting the \emph{Exponential} mechanism
to privately select the minimum hash value
in the steps within MinHashing.
Specifically,
when adopting the \emph{Exponential} mechanism,
the \emph{Score Function} $q$ should be carefully defined.
Following the suggestions in~\cite{JorgensenYC15ICDE},
we use the notation $h_{k}(S) \oplus h_{k}(S')$
to denote the certain set of elements
in which $h_{k}(S)$ and $h_{k}(S')$ are differ.
It is noted that herein
$h_{k}(S)$ and $h_{k}(S')$ are not limited to the neighbouring datasets.
Then,
the \emph{Score Function} can be defined as
\begin{equation}
q(h_{k}(S),\psi)=\max \limits_{min\{h_{k}(S')\}
=\psi} -| h_{k}(S) \oplus h_{k}(S')|,
\end{equation}
and the \emph{sensitivity} of $q$ is $\bigtriangleup q = 1$.

The following example provides concrete cases
showing how to compute the score of the candidates in a given $h_{k}(S)$.
\begin{description}
\item[Example 2]
Assume $h_{k}(S)=\{11,13,15,16,19\}$,
then we have
$q(h_{k}(S),10)=q(h_{k}(S),12)=q(h_{k}(S),13)=-1$
because changing only one element in $h_{k}(S)$
is enough to make these minimum values become the true
answers of $min\{h_{k}(S')\}$.
In the meanwhile,
we have $q(h_{k}(S),16)=-3$ because making $16$ become
the true answer of $min\{h_{k}(S')\}$ would require
changing three elements in $h_{k}(S)$.
\end{description}

\begin{algorithm}[htbp]
\small
\caption{Private MinHash Value Generation}
\label{alg:private_minhash}
\begin{algorithmic}[1]
\REQUIRE
        Profile $S=\{ s_1, s_2, ... , s_N \}$,
        $K$ hash functions,
        overall privacy budget $\epsilon$.
\ENSURE
        Perturbed MinHash signature $\widetilde{h_{(K)}}(S)$.
\STATE
Initialize a null vector $\widetilde{h_{(K)}}(S)$;
\FOR {$k \leftarrow 1...K$}
\FOR {$n \leftarrow 1...N$}
\STATE  Compute the hash value $h_{k}(s_n)$;
\ENDFOR
\STATE  Construct the hash values set
$h_{k}(S)=\{ h_{k}(s_1), h_{k}(s_2), ..., h_{k}(s_N) \}$;
\STATE  Select the minimum hash value $min\{h_{k}(S)\}$
with probability proportional to $\textsf{exp}(\frac{\epsilon
q(h_{k}(S),\psi)}{2K\bigtriangleup q})$;
\STATE  Append to $\widetilde{h_{(K)}}(S)$;
\ENDFOR
\RETURN $\widetilde{h_{(K)}}(S)$
\end{algorithmic}
\end{algorithm}

The details of \emph{Private MinHash Value Generation}
are shown in Algorithm~\ref{alg:private_minhash}.
It starts with a user profile $S$,
the overall privacy budget $\epsilon$ and $K$ hash functions.
First,
the cloud platform initializes a null vector
$\widetilde{h_{(K)}}(S)$ (Line 1).
Then,
for the $k^{th} \ (1 \leq k \leq K)$ hash function $h_{k}$,
the cloud platform computes the related hash value set
$h_{k}(S)$ (line 2-4).
Next,
to achieve the differential privacy,
the cloud platform selects the minimum hash value
$min\{h_{k}(s_n)\}$
with probability proportional to
$\textsf{exp}(\frac{\epsilon
q(h_{k}(S),\psi)}{2K\bigtriangleup q})$,
and appends the selected value
to \emph{Perturbed MinHash Signature}
$\widetilde{h_{(K)}}(S)$ (Line 5-8).
The generated \emph{Perturbed MinHash Signature}
can be further used to compute
the MinHash-based Jaccard similarity.

\subsection{Randomized MinHashing Steps Selection}
\label{sec-randomized-minhashing-steps-selection}

Although the above operation relies on
the minimum hash value computation process
within the MinHashing phase
and makes full use of the internal noise,
its utility remains far from acceptable.
This is because when all the steps within the MinHashing phase
adopt the \emph{Exponential} mechanism,
the cumulative noise would seriously distort the accuracy of outputs.
Fortunately,
we find that
if we use the \emph{Randomized Response} technique
to select steps for adopting the \emph{Exponential} mechanism,
the combined algorithm will successfully achieve
both in rigorous differential privacy and in acceptable utility.
More specifically,
the \emph{Randomized MinHashing Steps Selection} operation
consists of two main steps which are described as the following.
\begin{enumerate}
\item
The operation randomly selects several steps out of
all the $K$ steps within the MinHashing phase with probability $P_{r}$.
For the convenience of recording this result,
we maintain an \emph{Original Flip Vector}
$\overrightarrow{V}$
in which the binary value in the $k^{th}$ ($1 \leq k \leq K$) place represents
whether the $k^{th}$ step is initially implemented
with the \emph{Exponential} mechanism.
As the above generated vector should be prevented from
potential attack and should not be directly used
in the \emph{Private MinHash Value Generation} operation,
a \emph{Perturbed Flip Vector} $\overrightarrow{V'}$
will be generated in the next step.

\item
Aiming to satisfy the differential privacy in this step,
we generate the \emph{Perturbed Flip Vector}
$\overrightarrow{V'}$
with \emph{Randomized Response} technique
as described in Section~\ref{sec-rr} and
similar to the \emph{Permanent Randomized Response} proposed in~\cite{ErlingssonPK14CCS}.
The \emph{Perturbed Flip Vector}
$\overrightarrow{V'}$ will indicate
which steps within the MinHashing phase will finally be
implemented with the \emph{Exponential} mechanism.
Specifically,
given an \emph{Original Flip Vector} $\overrightarrow{V}$,
for the value $V_{k}$ in each bit $k \in [0,K]$ of $\overrightarrow{V}$,
this step generates a perturbed binary value $V_{k}'$
which equals to:
\begin{equation*}
V_{k}' =
\begin{cases}
1, & \text{with probability $1/2 P_{t}$},\\
0, & \text{with probability $1/2 P_{t}$},\\
V_{k}, & \text{with probability $1 - P_{t}$},
\end{cases}
\end{equation*}
where $P_{t}$ is the threshold probability to flip the original binary value.
In this way,
the \emph{Perturbed Flip Vector} $\overrightarrow{V'}$
will be generated.
It is noted that we directly set the $P_{r} = P_{t}$
in our experiments as a default setting.
An intuition description of this procedure is shown in
Fig~\ref{fig:perturbed-flip-vector}.
\end{enumerate}

\begin{figure}[htbp]
\centering
\includegraphics[scale=0.5]{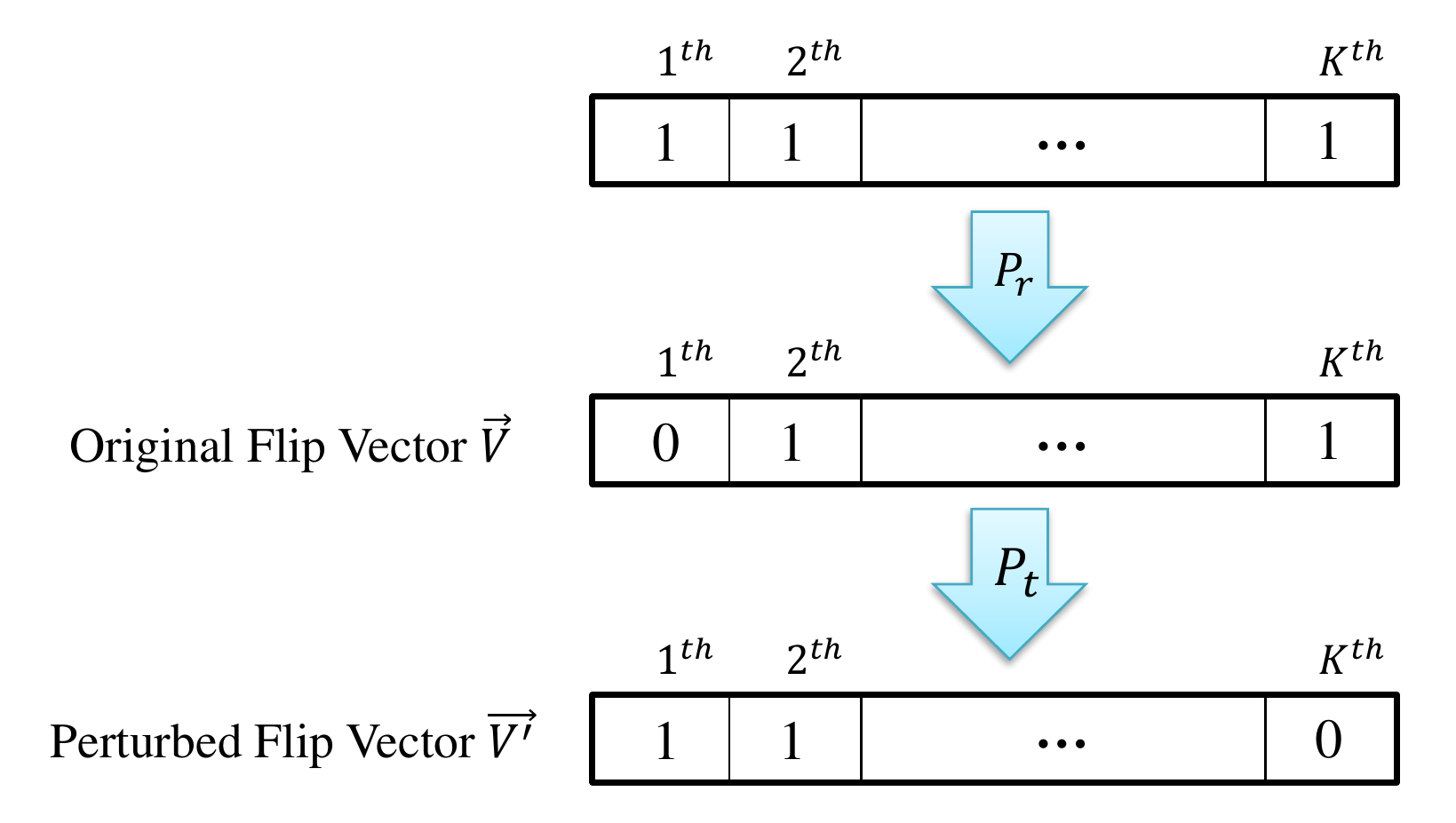}
\caption{Generation of Perturbed Flip Vector}
\label{fig:perturbed-flip-vector}
\end{figure}

By incorporating the \emph{Private MinHash Value Generation}
with the \emph{Randomized MinHashing Steps Selection},
the pseudocode of the \texttt{PrivMin} algorithm
is given in Algorithm~\ref{alg:privmin}.
It starts with a profile $S$ of a user,
the overall privacy budget $\epsilon$ and $K$ hash functions.
\textbf{Firstly,}
the cloud platform initializes a null vector
$\widetilde{h_{(K)}}(S)$ (Line 1) and
divides the overall privacy budget $\epsilon$
into two equal parts,
$\epsilon_{1}$ and $\epsilon_{2}$ (Line 2-3).
The former one is used to calculate the value of the
probability threshold $P_{t}$
used in \emph{Randomized MinHashing Steps Selection}
while the latter will be assigned
to the \emph{Private MinHashing Signature Generation}.
\textbf{Secondly,}
the \emph{Perturbed Flip Vector} generation
would be triggered as described above (Line 4-5).
\textbf{Thirdly,}
according to the generated $\overrightarrow{V'}$,
the \emph{Exponential} mechanism would be deployed on
the marked steps within the MinHashing phase (Line 6-15).
And for the steps which are not marked
in $\overrightarrow{V'}$,
compute their original outputs (Line 16-22).
\textbf{Finally,}
the computed minimum hash values of each step
would be appended to \emph{Perturbed MinHash Signature}
$\widetilde{h_{(K)}}(S)$.
The generated \emph{Perturbed MinHash Signature}
can be further used to compute
the MinHash-based Jaccard similarity.

\begin{algorithm}[htbp]
\small
\caption{\texttt{PrivMin} algorithm}
\label{alg:privmin}
\begin{algorithmic}[1]
\REQUIRE
        Profile $S=\{ s_1, s_2, ... , s_N \}$,
        $K$ hash functions,
        overall privacy budget $\epsilon$.
\ENSURE
        Perturbed MinHash signature $\widetilde{h_{(K)}}(S)$.
\STATE  Initialize a row vector $\widetilde{h_{(K)}}(S)$;
\STATE  $\epsilon_{1} \leftarrow \epsilon / 2$,
        $P_{t} \leftarrow \frac{e^{\epsilon_{1} / K}}{1 + e^{\epsilon_{1} / K}}$;
\STATE  $\epsilon_{2} \leftarrow \epsilon / 2$;
\STATE  Construct the \emph{Original Flip Vector} $\overrightarrow{V}$ by randomly choosing
        its $K$ elements
        $r_{1},r_{2},...,r_{K}$ from $\{0,1\}^{1}$ with probability $P_{r}$;
\STATE  Construct the \emph{Perturbed Flip Vector} $\overrightarrow{V'}$ by implementing
        \emph{Randomized Response} technique described in Section~\ref{sec-randomized-minhashing-steps-selection};
\STATE  m $\leftarrow$ Compute the numbers of the elements in $\overrightarrow{V'}$
        that is equal to $0$;
\STATE  $\epsilon' \leftarrow \epsilon_{2} / m$;
\FOR {$k \leftarrow 1...K$}
\IF{$V_{k}' = 0$}
\FOR {$n \leftarrow 1...N$}
\STATE  Compute the hash value $h_{k}(s_n)$;
\ENDFOR
\STATE  Construct the hash values set
        $h_{k}(S)=\{ h_{k}(s_1), h_{k}(s_2), ..., h_{k}(s_N) \}$;
\STATE  Select the minimum hash value $min\{h_{k}(S)\}$
        with probability proportional to $\textsf{exp}(\frac{\epsilon'
        q(h_{k}(S),\psi)}{2 \bigtriangleup q})$;
\STATE  Append to $\widetilde{h_{(K)}}(S)$;
\ELSE
\FOR {$n \leftarrow 1...N$}
\STATE  Compute the hash value $h_{k}(s_n)$;
\ENDFOR
\STATE  Construct the hash values set
        $h_{k}(S)=\{ h_{k}(s_1), h_{k}(s_2), ..., h_{k}(s_N) \}$;
\STATE  Select the minimum value $min\{h_{k}(S)\}$;
\STATE  Append to $\widetilde{h_{(K)}}(S)$;
\ENDIF
\ENDFOR
\RETURN $\widetilde{h_{(K)}}(S)$
\end{algorithmic}
\end{algorithm}

\section{Algorithm Analysis}\label{sec-analysis}

The proposed \texttt{PrivMin} algorithm aims to
achieve the differential privacy while
maintaining an acceptable utility.
In this section,
we will prove that
the algorithm satisfies $\epsilon$-differential privacy
and then provide the utility analysis.

\subsection{Privacy Analysis}\label{sec-privacy-analysis}

Based on the \emph{Sequential Composition} and
the \emph{Parallel Composition} as in
Theorem~\ref{thr:sequential-composition} and
~\ref{thr:parallel-composition},
we have the following theorem on the privacy guarantee of the proposed algorithm.

\begin{theorem}\label{thr:output-perturbation}
The \texttt{PrivMin} algorithm satisfies
$\epsilon$-differential privacy.
\end{theorem}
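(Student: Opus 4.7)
The plan is to decompose the \texttt{PrivMin} algorithm into the two sub-mechanisms described in Section~\ref{sec-private-jaccard-similarity-computation}, assign each one its own share of the overall privacy budget, bound each sub-mechanism via the primitives recalled in Section~\ref{sec-preliminaries}, and then invoke the \emph{Sequential Composition} theorem (Theorem~\ref{thr:sequential-composition}) to combine them.

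First, I would analyze the \emph{Randomized MinHashing Steps Selection}. The \emph{Perturbed Flip Vector} $\overrightarrow{V'}$ is generated bit-by-bit from $\overrightarrow{V}$ by applying a Randomized-Response-style flip with probability $P_t = e^{\epsilon_1/K}/(1+e^{\epsilon_1/K})$, which by the identity recalled in Section~\ref{sec-rr} corresponds to $(\epsilon_1/K)$-differential privacy per bit. Since the $K$ bits are flipped independently and each bit depends on a single coordinate of the input coin toss, sequential composition across the $K$ bits establishes that the release of $\overrightarrow{V'}$ satisfies $\epsilon_1$-differential privacy. It is essential here that only $\overrightarrow{V'}$, not $\overrightarrow{V}$, influences the remainder of the computation.

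Second, I would analyze the \emph{Private MinHash Value Generation} step conditional on a realized $\overrightarrow{V'}$. Let $m$ be the number of coordinates with $V'_k = 0$, so that the per-step budget is $\epsilon' = \epsilon_2/m$. For each such coordinate, the selection of $\min\{h_k(S)\}$ with probability proportional to $\exp\!\bigl(\epsilon' q(h_k(S),\psi)/(2\Delta q)\bigr)$ is exactly the Exponential mechanism of Definition~\ref{def:exp-mech}, hence $\epsilon'$-differentially private, using $\Delta q = 1$ as established in Section~\ref{sec-private-minhash-signature-generation}. Because all $m$ applications operate on the same profile $S$ rather than on disjoint partitions, I would invoke Theorem~\ref{thr:sequential-composition} across these $m$ applications to obtain a combined budget of $m \cdot \epsilon' = \epsilon_2$. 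For the remaining coordinates with $V'_k = 1$, the exact $\min\{h_k(S)\}$ is released; these releases are handled by folding them into the joint analysis with $\overrightarrow{V'}$. Finally, treating the two sub-mechanisms (Randomized Response on $\overrightarrow{V}$ and the Exponential-mechanism cascade on the selected coordinates) as a two-stage sequential composition yields $\epsilon_1 + \epsilon_2 = \epsilon/2 + \epsilon/2 = \epsilon$, matching the claim.

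The main obstacle, as I see it, will be rigorously justifying why the coordinates whose flip bit is $V'_k = 1$ (and which therefore release the true minimum hash value) do not break differential privacy. The intended intuition is that, because $\overrightarrow{V'}$ itself was produced by Randomized Response, an observer cannot tell with certainty whether a given coordinate was released exactly or was perturbed by the Exponential mechanism; this ambiguity is precisely what the $\epsilon_1$-budget on $\overrightarrow{V'}$ buys. Turning this intuition into a bound requires writing the probability of any joint output as a sum over realizations of $\overrightarrow{V'}$, bounding each summand on neighbouring inputs $S,S'$ by a factor of $e^{\epsilon_1/K}$ (from the Randomized Response step) times $e^{\epsilon'}$ per perturbed coordinate, and verifying that the resulting telescoped ratio is at most $e^\epsilon$. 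A secondary subtlety is that $m$ is itself data-independent only because it depends solely on $\overrightarrow{V'}$, not on $S$; this is what keeps the per-step budget $\epsilon' = \epsilon_2/m$ from inflating the analysis and allows the sequential-composition bookkeeping to close cleanly.
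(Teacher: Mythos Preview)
Your decomposition is exactly the paper's: bound the \emph{Randomized MinHashing Steps Selection} by $\epsilon_1$ via Randomized Response (the paper cites~\cite{ErlingssonPK14CCS} and writes $\epsilon_1 = K\ln\!\bigl(P_t/(1-P_t)\bigr)$), bound the \emph{Private MinHash Value Generation} by $\epsilon_2$ via $m$ applications of the Exponential mechanism at $\epsilon_2/m$ each, and then invoke Theorem~\ref{thr:sequential-composition} to obtain $\epsilon = \epsilon_1+\epsilon_2$. So at the level of structure you have reproduced the paper's proof.

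The obstacle you flag---that coordinates with $V'_k=1$ release the \emph{exact} $\min\{h_k(S)\}$ and hence are not covered by the Exponential mechanism---is real, and you should know that the paper's own proof does \emph{not} address it either: the paper simply asserts that the two sub-mechanisms are $\epsilon_1$- and $\epsilon_2$-DP respectively and composes. Your proposed fix (arguing that the Randomized-Response ambiguity on $\overrightarrow{V'}$ masks whether a coordinate was released exactly) does not close the gap as stated, because $\overrightarrow{V}$ and hence $\overrightarrow{V'}$ are generated independently of the profile $S$; the $\epsilon_1$ budget protects $\overrightarrow{V}$, not $S$, and so conditioning on the event $V'_k=1$ still leaves a deterministic, unperturbed release of $\min\{h_k(S)\}$ on that coordinate. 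In other words, neighbouring profiles $S,S'$ that differ in $\min\{h_k(\cdot)\}$ can produce disjoint outputs on that coordinate with positive probability, and no amount of randomness on the data-independent flip vector repairs this. Your write-up is therefore faithful to the paper, but you have correctly located a step that neither your sketch nor the paper's proof actually justifies.
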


\begin{proof}
Two independent private operations
of the \texttt{PrivMin} algorithm can respectively
satisfy relevant level of differential privacy as follows:
\begin{enumerate}
\item[-]
Based on the proofs in~\cite{ErlingssonPK14CCS},
since we adopt a similar randomized response approach
as the \emph{Permanent Randomized Response}
in~\cite{ErlingssonPK14CCS},
we can conclude that
the \emph{Randomized MinHashing Steps Selection} operation
satisfies $\epsilon_{1}$-differential privacy
where $\epsilon_{1} = Kln(\frac{P_{t}}{1-P_{t}})$.

\item[-]
As the \emph{Private MinHash Value Generation} operation
adopts the \emph{Exponential} mechanism successively in
the privately selected steps within the MinHashing phase,
this operation satisfies $\epsilon_{2}$-differential privacy
since the $m$ selected steps respectively achieve
$\frac{\epsilon_{2}}{m}$-differential privacy.
\end{enumerate}
Consequently,
according to the \emph{Sequential Composition},
we can conclude that
the \texttt{PrivMin} algorithm
satisfies $\epsilon$-differential privacy
where $\epsilon =\epsilon_{1}+\epsilon_{2}$.
\end{proof}

\subsection{Utility Analysis}\label{sec-utility analysis}

Here we adopt \emph{$(\alpha,\delta)-usefulness$}
to measure the \emph{Semantic Loss} in
each step of the \emph{Private MinHash Value Generation} operation.

\begin{theorem}
For all $\delta > 0$,
with probability at least $1- \delta$,
the \emph{SLoss} of the MinHash signatures
in the \emph{Private MinHash Value Generation}
operation is less than $\alpha$.
When
\begin{equation*}
1 - \frac{3}{2}P_{t} + P_{t}^{2} \leq \delta \alpha,
\end{equation*}
where $P_{t} = \frac{e^{\epsilon_{1} / K}}{1 + e^{\epsilon_{1} / K}}$,
and the \emph{Private MinHash Value Generation} operation
is satisfied with
$(\alpha,\delta)$-useful.
\end{theorem}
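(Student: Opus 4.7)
The plan is to establish $(\alpha,\delta)$-usefulness by first computing the per-step probability that a perturbation is actually injected into the MinHash signature, and then applying Markov's inequality to lift this into a tail bound on the overall semantic loss. The natural per-step indicator of loss is the event $\{V_{k}' = 0\}$, since only in that case does the \emph{Exponential} mechanism fire and potentially alter the true $\min\{h_{k}(S)\}$; steps with $V_{k}' = 1$ report the exact minimum and contribute nothing to SLoss. I would therefore take SLoss to be (a normalized version of) the fraction of coordinates of $\widetilde{h_{(K)}}(S)$ at which a deviation from the unperturbed MinHash signature can occur.

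Under the default choice $P_{r} = P_{t}$ stated in Section~\ref{sec-randomized-minhashing-steps-selection}, the \emph{Original Flip Vector} satisfies $V_{k} = 1$ with probability $P_{t}$ and $V_{k} = 0$ with probability $1 - P_{t}$, and the \emph{Randomized Response} step rewrites each bit independently according to the three-way distribution in the algorithm. A direct case split yields
\begin{equation*}
\Pr\bigl[V_{k}' = 0\bigr] \;=\; (1 - P_{t})\bigl(1 - \tfrac{1}{2}P_{t}\bigr) + P_{t}\cdot\tfrac{1}{2}P_{t} \;=\; 1 - \tfrac{3}{2}P_{t} + P_{t}^{2}.
\end{equation*}
Since the $K$ bits of $\overrightarrow{V'}$ are independent, the expected SLoss of the output signature (averaged over coordinates) is bounded by the same expression.

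Applying Markov's inequality then gives
\begin{equation*}
\Pr\bigl[\mathrm{SLoss} \geq \alpha\bigr] \;\leq\; \frac{\mathbb{E}[\mathrm{SLoss}]}{\alpha} \;\leq\; \frac{1 - \tfrac{3}{2}P_{t} + P_{t}^{2}}{\alpha}.
\end{equation*}
Demanding the right-hand side to be at most $\delta$ reproduces exactly the hypothesized sufficient condition $1 - \frac{3}{2}P_{t} + P_{t}^{2} \leq \delta\alpha$, and substituting $P_{t} = e^{\epsilon_{1}/K}/(1 + e^{\epsilon_{1}/K})$ from the privacy analysis in Theorem~\ref{thr:output-perturbation} closes the argument.

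The main obstacle I anticipate is pinning down the definition of SLoss so that the Markov reduction is airtight. When the \emph{Exponential} mechanism does fire at step $k$, it does not deterministically return $\min\{h_{k}(S)\}$; it returns a candidate whose deviation is controlled by $\bigtriangleup q = 1$ and $\epsilon' = \epsilon_{2}/m$. I would handle this either by defining SLoss as the $0/1$ indicator of any coordinate-wise deviation (so that $\Pr[V_{k}'=0]$ is an immediate upper bound on the per-coordinate loss), or by absorbing the Exponential mechanism's utility guarantee into a constant that sits outside $\alpha$. A tighter variant would sharpen the bound using the Exponential mechanism's known tail, but the looser Markov estimate already suffices for the stated condition.
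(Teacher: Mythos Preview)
Your proposal is correct and follows essentially the same route as the paper: compute the per-step probability that the Exponential mechanism is invoked (the paper computes the complementary ``unchange'' probability $\tfrac{3}{2}P_t - P_t^2$ first, but with $P_r=P_t$ this yields the identical $1-\tfrac{3}{2}P_t+P_t^2$ you obtained), bound $\mathbb{E}[\mathrm{SLoss}]$ by this quantity, and conclude via Markov. The paper's SLoss is a normalized-distance version rather than your indicator version, but it immediately bounds the distance ratio by $1$, so the two arguments collapse to the same inequality.
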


\begin{proof}
According to Markov's inequality,
we have
\begin{equation}\label{equ:marlkov-inequality}
Pr(SLoss > \alpha) \leq \frac{E(SLoss)}{\alpha}
\end{equation}
For each minimum hash value $min\{h_{k}(S)\}$
in $\widetilde{h_{(K)}}(S)$,
the probability of ``unchange"
in the randomized private selection
is proportional to
\begin{dmath*}
P_{r} \cdot (\frac{1}{2}P_{t} + 1 - P_{t})
+ (1 - P_{r}) \cdot \frac{1}{2}P_{t}
= P_{r} - P_{r}P_{t} + \frac{1}{2}P_{t}
= \frac{3}{2}P_{t} - P_{t}^{2}.
\end{dmath*}
Therefore,
we have
\begin{equation*}
E(SLoss) = \sum_{min\{h_{k}(S)\} \in h_{(K)}(S)}
\frac{d(min\{h_{k}(S)\},\widehat{min}\{h_{k}(S)\})}{max~d \cdot | h_{(K)}(S) |}
(1 - \frac{3}{2}P_{t} + P_{t}^{2}).
\end{equation*}
According to Eq.~\eqref{equ:marlkov-inequality},
the evaluation of the \emph{SLoss} is
\begin{equation*}
Pr(SLoss > \alpha) \leq \frac
{\sum_{min\{h_{k}(S)\} \in h_{(K)}(S)} d(min\{h_{k}(S)\},\widehat{min}\{h_{k}(S)\})
(1 - \frac{3}{2}P_{t} + P_{t}^{2})}
{max~d \cdot | h_{(K)}(S) | \cdot \alpha}.
\end{equation*}
When we take the maximal $d(min\{h_{k}(S)\},\widehat{min}\{h_{k}(S)\}) = K$,
it can be simplified as
\begin{equation}
Pr(SLoss > \alpha) \geq 1 - \frac{1 - \frac{3}{2}P_{t} + P_{t}^{2}}{\alpha}.
\end{equation}
Let
\begin{equation*}
1 - \frac{1 - \frac{3}{2}P_{t} + P_{t}^{2}}{\alpha} \geq 1 - \delta,
\end{equation*}
thus
\begin{equation}
1 - \frac{3}{2}P_{t} + P_{t}^{2} \leq \delta \alpha,
\end{equation}
where $P_{t} = \frac{e^{\epsilon_{1} / K}}{1 + e^{\epsilon_{1} / K}}$.
\end{proof}
The proof shows that the Semantic Loss of
the \emph{Private MinHash Value Generation} operation
mainly depends on the privacy budget $\epsilon_{1}$
and the hash function number $K$.

\section{Experiment and Analysis}\label{sec-experiment}

In this section,
we conduct experiments to examine the performance
of the proposed \texttt{PrivMin} algorithm
by answering the following questions:
\begin{description}
\item[How does the \texttt{PrivMin} algorithm preserve the utility?]
The \texttt{PrivMin} algorithm aims to release
Jaccard similarities with acceptable utility.
In Section~\ref{sec-experimental-performance},
we will investigate its performance in terms of
\emph{$F1$ Score} and \emph{Mean Squared Error} (MSE)
on the released similarities,
and compare it with the \emph{Baselin}e algorithm
and \texttt{MH-JSC}.

\item[How will the main parameters impact on the performance of it?]
The \texttt{PrivMin} algorithm has two parameters
$\epsilon$ and $K$:
$\epsilon$ controls the privacy level of algorithms;
and $K$ determines the total number of Hash functions
which are used in MinHashing phase.
In Section~\ref{sec-impact-of-privacy-budget}
and~\ref{sec-impact-of-hash-fuction-number},
we will investigate and analyze their impacts on
the involved three algorithms.
\end{description}

\subsection{Experiment Setting}

\subsubsection{Datasets and Configuration}

We evaluate the compared algorithms on four real textual datasets:
\begin{itemize}
\item
\textsf{Alpine Dale}:
The \textsf{Alpine Dale} dataset
\footnote{http://www.inf.ed.ac.uk/teaching/courses/tts/assessed/assessment3.html}
was retrieved from the course website
of ``text technologies for data science''
by the University of Edinburgh,
and includes $10000$ news stories for plagiarism detection.
In the following experiments,
we will use a subset with $1000$ records.

\item
\textsf{BBC Sport}:
This dataset was derived from \emph{Insight Project Resources}
~\footnote{http://mlg.ucd.ie/datasets/bbc.html}.
It contains $737$ documents from the BBC Sport website
corresponding to sports news articles in five topical areas
from $2004-2005$.

\item
\textsf{Opinosis}~\cite{White:2015:WSE:2838931.2838932,ganesan2010opinosis}:
This dataset from \emph{Paraphrase Grouped Corpora}
\footnote{http://white.ucc.asn.au/resources/paraphrase_grouped_corpora/}
is a subset of the \texttt{Opinosis} corpus
\footnote{http://kavita-ganesan.com/opinosis-opinion-dataset}.
It contains $669$ sentences
which were manually grouped according to their meaning.

\item
\textsf{MSRP}~\cite{White:2015:WSE:2838931.2838932,dolan2004unsupervised}:
This dataset from \emph{Paraphrase Grouped Corpora}
is a subset of the the Microsoft Research Paraphrase corpus
\footnote{http://research.microsoft.com/en-us/downloads/607d14d9-20cd-47e3-85bc-a2f65cd28042/}.
It contains $859$ sentences
which was automatically grouped according to
its original manually annotated meaning.

\end{itemize}

The involved three algorithms are implemented
in Python 2.7 based on the code by Chris McCormick
\footnote{http://mccormickml.com/2015/06/12/minhash-tutorial-with-python-code/}.
All the experiments are conducted on
an Intel Core i5-3210M 2.50GHz PC with 6GB memory.
In each experiment,
every algorithm is executed $10$ times,
and its average score is reported.

\subsubsection{Experiment Parameters}

We consider two parameters $\epsilon$ and $K$
since the performance of algorithms could be affected by them:
\begin{description}
\item[the privacy budget $\epsilon$]
Although the tradeoff between the privacy budget $\epsilon$
and the utility under the naive \emph{Laplace} mechanism
and the \emph{Exponential} mechanism is known,
we also expect to discover the situation
in which the \emph{Randomized Response} cooperates
with the \emph{Exponential} mechanism.

\item[the number of Hash functions $K$]
Although it is clear that
a smaller number of hash functions
may lead to worse accuracy in similarity,
we expect that the \texttt{PrivMin} algorithm
to perform well when $K$ is relatively small.
\end{description}

In our experiments,
we will vary above two parameters to study their impacts
on the involved algorithms,
in terms to the metrics as mentioned in Section~\ref{sec-utility-metrics}.

\subsubsection{Utility Metrics}\label{sec-utility-metrics}

We adopt the \emph{$F1$ score}
and \emph{Mean Squared Error} (MSE)
to measure the utility performance
among the proposed \texttt{PrivMin} algorithm,
the \emph{Baseline} algorithm and the \texttt{MH-JSC}.
\begin{description}
  \item[$F1$ Score]
The \emph{$F1$ Score} is the harmonic mean of
\emph{Precision $P$} and \emph{Recall $R$},
which can measure the algorithm outputs's accuracy
compared with the given ground truth.
A higher \emph{$F1$ Score} means a better accuracy.
Herein,
the accurate Jaccard similarity
of given two profiles is set as the ground truth.
We aim to investigate the statistical differences
between the released perturbed Jaccard similarity
and the accurate one in several tests.
The \emph{$F1$ Score} can be calculated as
\begin{equation}\label{equ:F1-score1}
F1 = \frac{2 \times P \times R}{P + R},
\end{equation}

\begin{equation}\label{equ:F1-score2}
P = \frac{TP}{TP + FP},
\end{equation}

\begin{equation}\label{equ:F1-score3}
R = \frac{TP}{TP + FN},
\end{equation}
where $TP$ is \emph{true positive},
$FP$ is \emph{false positive},
$TN$ is \emph{true negative},
$FN$ is \emph{false negative}.
Table~\ref{tab:tp-fp-tn-fn} shows the details of setting for these four variables.
According to the specific characteristics of
the textual records within four datasets,
we empirically set the related thresholds as
$0.5$,
$0.4$,
$0.5$ and $0.3$.

\begin{table}[h]  \centering
\small
  \caption{Settings of TP, FP, TN and FN}\label{tab:tp-fp-tn-fn}
  \begin{tabular}{l|l}
\hline
    $TP$
    & \tabincell{c}{the number of test in which \textbf{both} the perturbed Jaccard similarity\\
    and the accurate Jaccard similarity are \textbf{above} a given threshold.}  \\ \hline

    $FP$
    & \tabincell{c}{the number of test in which \textbf{only} the perturbed Jaccard similarity\\
    is \textbf{above} a given threshold and the accurate Jaccard similarity is not.}  \\ \hline

    $TN$
    & \tabincell{c}{the number of test in which \textbf{both} the perturbed Jaccard similarity\\
    and the accurate Jaccard similarity are \textbf{below} a given threshold.}  \\ \hline

    $FN$
    & \tabincell{c}{the number of test in which \textbf{only} the perturbed Jaccard similarity\\
    is \textbf{below} a given threshold and the accurate Jaccard similarity is not.}  \\ \hline
\hline
\end{tabular}
\end{table}

\item[Mean Squared Error (MSE)]
The \emph{Mean Squared Error} (MSE)
is a measure of the quality of an estimator
by calculating the error between the estimator's predicted value and
its accurate value.
A lower MSE means a better accuracy.
The MSE in the following experiments can be calculated as
\begin{equation}\label{equ:mse}
MSE = \frac{1}{n}\sum_{t=1}^{n}(predicted_{t}-accurate_{t})^{2},
\end{equation}
\end{description}
where the $predicted_{t}$ and $accurate_{t}$ are
corresponding to the perturbed Jaccard similarity
and accurate Jaccard similarity,
respectively.
We aim to investigate the numerical errors
between the released perturbed Jaccard similarity
and the accurate one in several tests.

\subsubsection{Compared Algorithms}

We consider a \emph{Baseline} algorithm
and the \texttt{MH-JSC} as the competitors
of the \texttt{PrivMin} algorithm.
\begin{description}
\item[Baseline]
The \emph{Baseline} algorithm is based on
the \emph{Output Perturbation} approach which
introduces differential privacy
by directly adding \emph{Laplacian} noise to
the output similarity $J_{mh}(S_{A},S_{B})$ of \texttt{MH-JSC}:
\begin{equation*}
\widetilde{J_{mh}}(S_{A},S_{B})
=J_{mh}(S_{A},S_{B})
+Laplace(\frac{\Delta J_{mh}}{\epsilon}).
\end{equation*}
The added \emph{Lapacian} noise is calibrated to
the sensitivity of \texttt{MH-JSC} as the following:
\begin{dmath*}
\Delta J_{mh} = \max_{S_{B}, S_{B'} neighbours} \| J_{mh}(S_{A},S_{B})-J_{mh}(S_{A},S_{B'}) \| \\
= \max_{S_{B}, S_{B'} neighbours}
\| \frac{|h_{(K)}(S_{A}) \cap h_{(K)}(S_{B})|-
|h_{(K)}(S_{A}) \cap h_{(K)}(S_{B'})|}{K} \| \\
\leq \max \limits_{S_{A},S_{B}}
\| \frac{|h_{(K)}(S_{A}) \cap h_{(K)}(S_{B})|-
(|h_{(K)}(S_{A}) \cap h_{(K)}(S_{B})| \pm 1)}{K} \| \\
= \frac{1}{K}.
\end{dmath*}
Finally,
the \emph{Baseline} algorithm will release the perturbed similarity
$\widetilde{J_{mh}}(S_{A},S_{B})$.
to the users $U_{A}$ and $U_{B}$.
Since the \emph{Baseline} algorithm intuitively adds coarse-grained noise
to achieve differential privacy,
we expect that it will underperform
the \texttt{PrivMin} algorithm in most cases.

\item[MH-JSC]
The \emph{MinHash-based Jaccard Similarity Computation}
(\texttt{MH-JSC}) can be regarded as a comparative algorithm
which maintains an empirical utility upper bound.
Since \texttt{MH-JSC} does not add any external noise,
we expected that it will outperform both the \texttt{PrivMin}
algorithm and the \emph{Baseline} algorithm in most cases,
and it's performance will also be much closer to that of
the \texttt{PrivMin} algorithm.
\end{description}

\subsection{The Performance of PrivMin}
\label{sec-experimental-performance}

\subsubsection{Impact of Privacy Budget}
\label{sec-impact-of-privacy-budget}

Firstly,
we fix $K=5,10,15,20,25$ and report the utility measures
of different algorithms when varying $\epsilon$
from $0.1$ to $1.0$.
Fig.~\ref{fig:varying-epsilon} shows
the $F1$ score over four datasets
with the change of $\epsilon$.
We observe that the \texttt{PrivMin} algorithm
has higher \emph{$F1$ Scores}
than the \emph{Baseline} algorithm
when given smaller $K$ and $\epsilon$
on all datasets.
Specifically in Fig.~\ref{fig:varying-epsilon}D,
when $K = 5$ and $\epsilon = 0.2$,
\texttt{PrivMin} achieves a \emph{$F1$ Score}
of $0.3060$ while \emph{Baseline} achieves only $0.0006$,
with an improvements by $50900\%$.
When $\epsilon = 0.5$,
\texttt{PrivMin} achieves a \emph{$F1$ Score}
of $0.6881$ and outperforms the \emph{Baseline} by $49050\%$.
The improvements by \texttt{PrivMin} can also be
observed in other subfigures in Fig.~\ref{fig:varying-epsilon}.
For the \emph{Baseline} algorithm,
the larger $\epsilon$,
the higher $F1$ scores.
However,
we observe that the \texttt{PrivMin} algorithm
is not clearly affected by changing the privacy budget.

For the \emph{Mean Squared Error} (MSE),
Table~\ref{tab:mse-comparison} shows that
the \texttt{PrivMin} algorithm generally outperforms the
\emph{Baseline} algorithm with the changing privacy budget.
And in some conditions,
it also maintains a better utility compared with
the \texttt{MH-JSC}.

\begin{figure*}[htbp]
\centering
\includegraphics[scale=0.35]{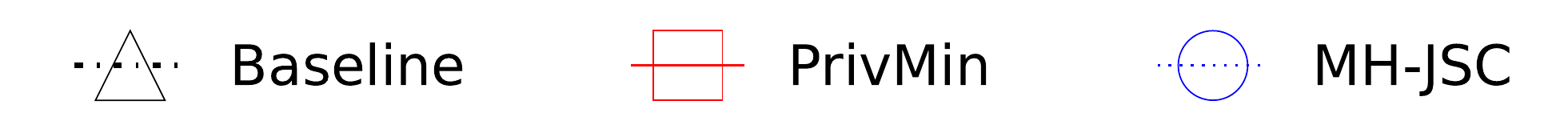}\\
\subfloat[Alpine Dale]{\includegraphics[width = 0.3\linewidth]{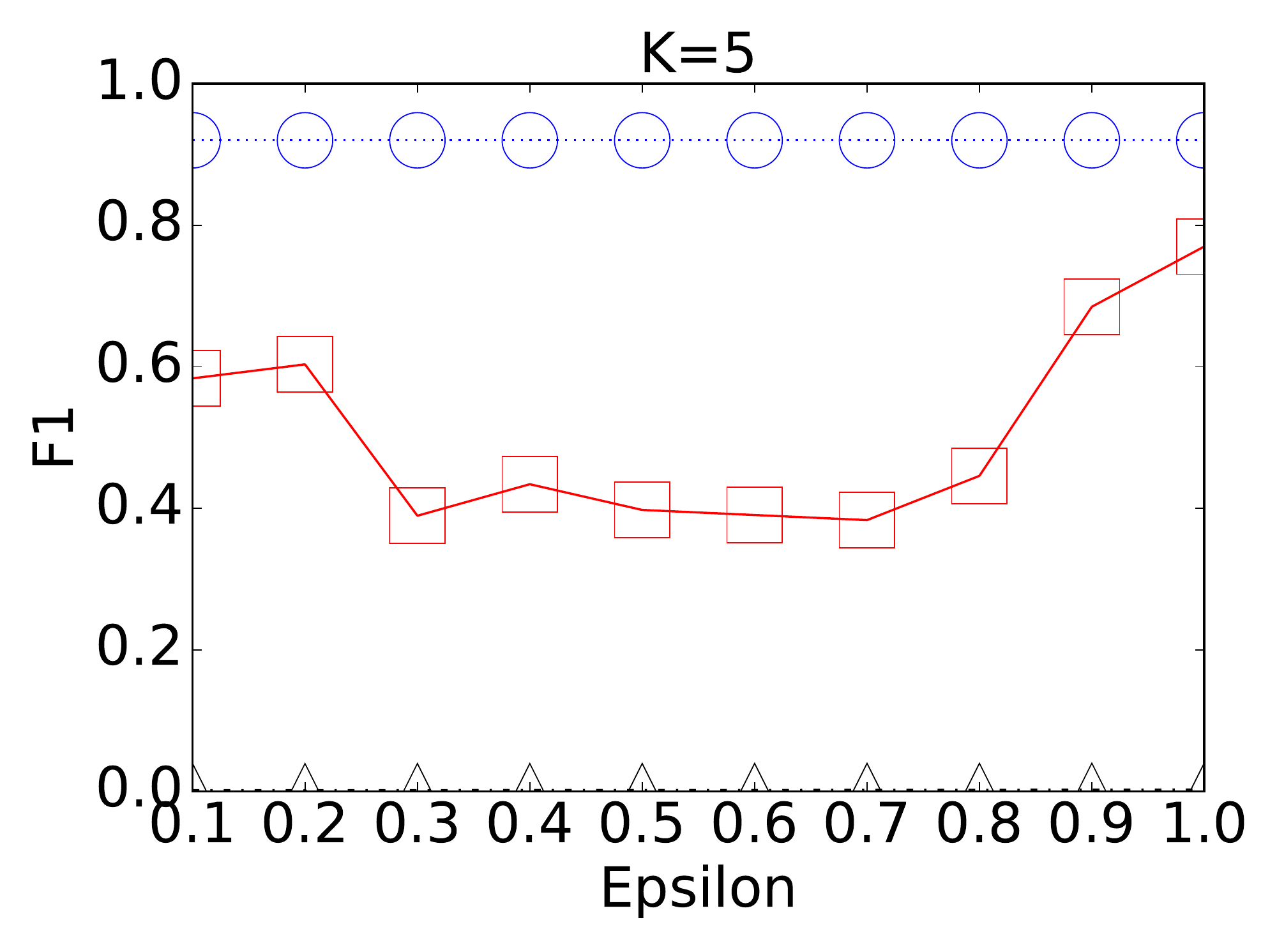}}\
\subfloat[Alpine Dale]{\includegraphics[width = 0.3\linewidth]{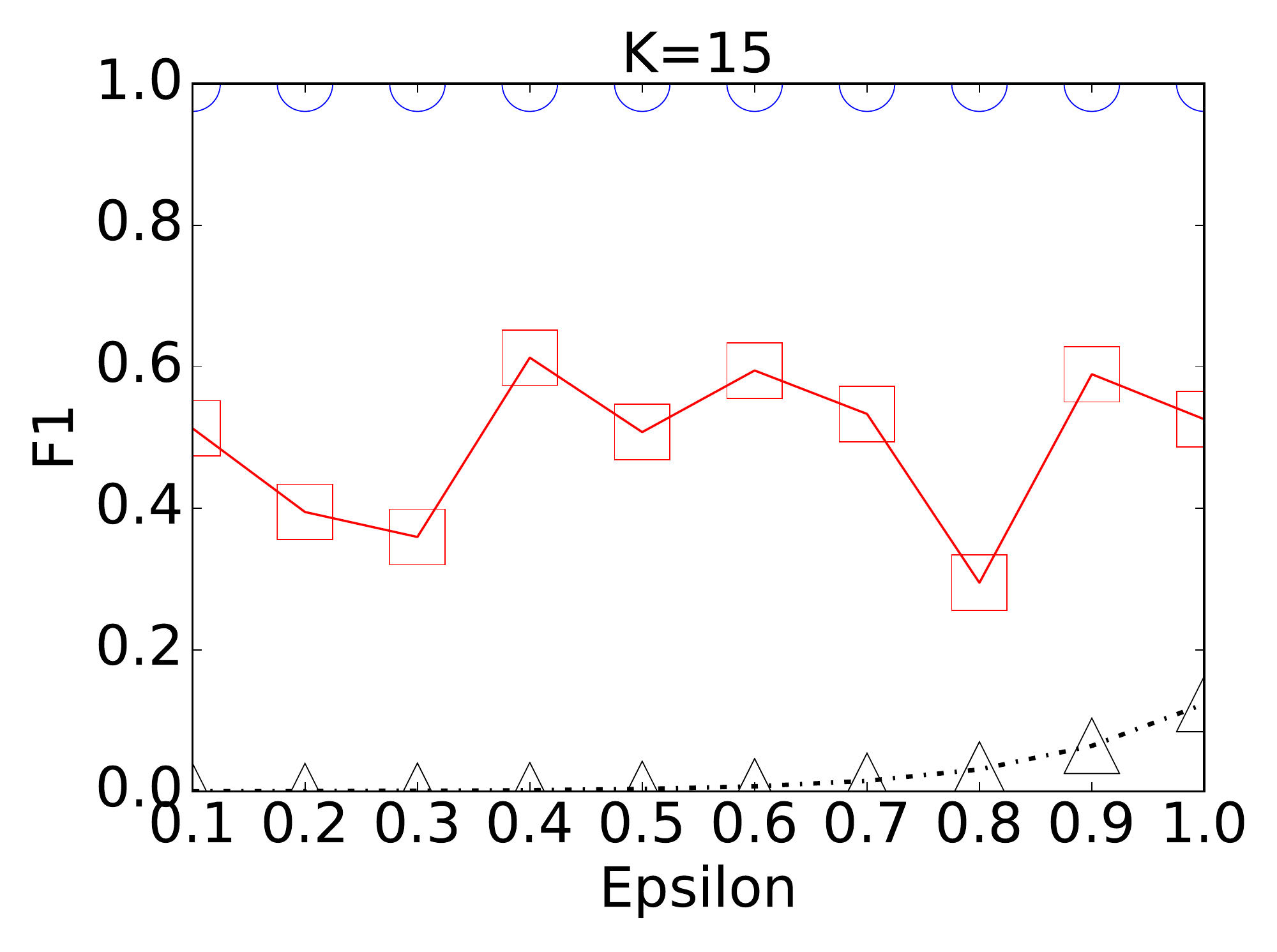}}\
\subfloat[Alpine Dale]{\includegraphics[width = 0.3\linewidth]{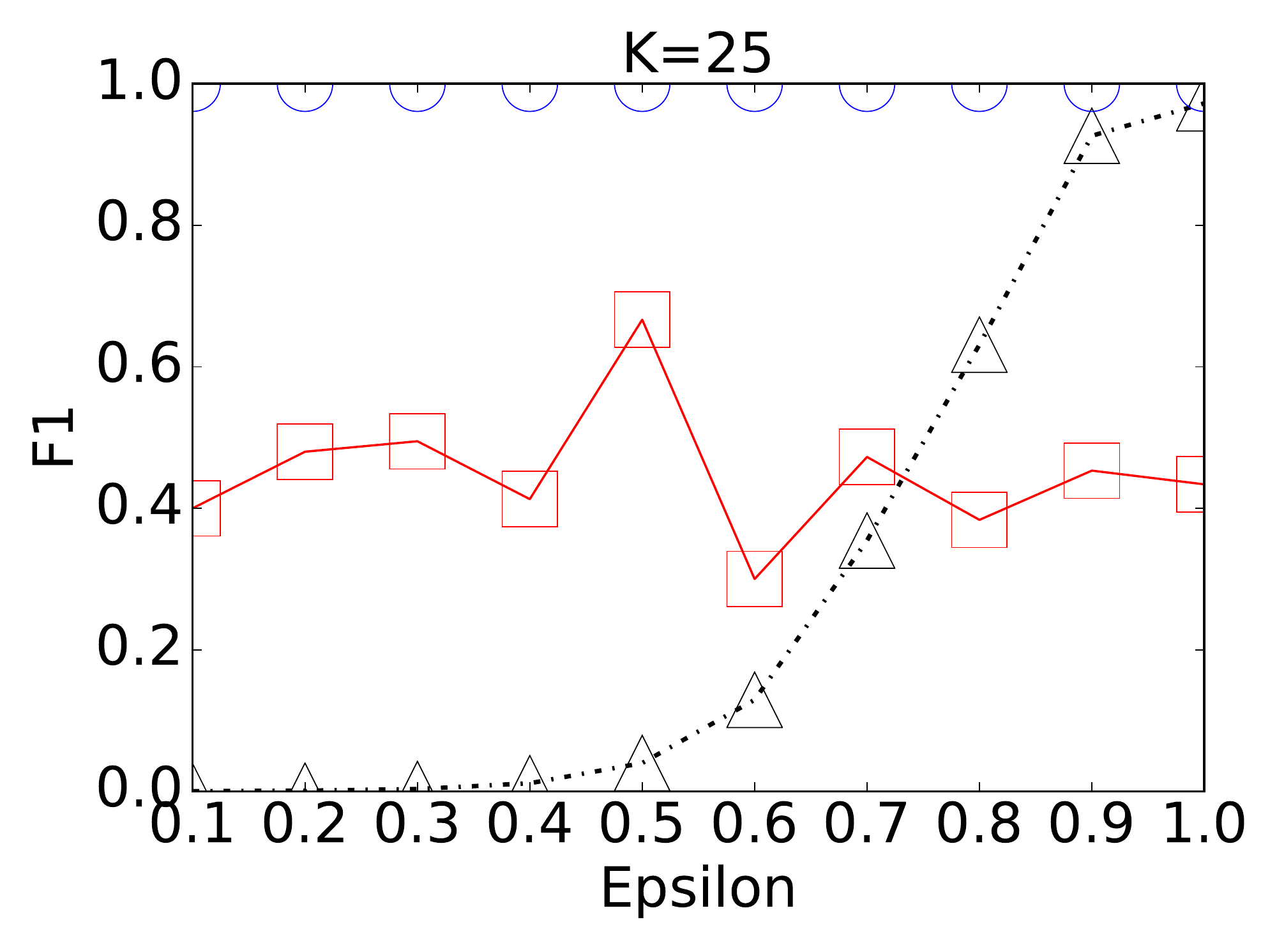}}\\
\subfloat[BBC Sport]{\includegraphics[width = 0.3\linewidth]{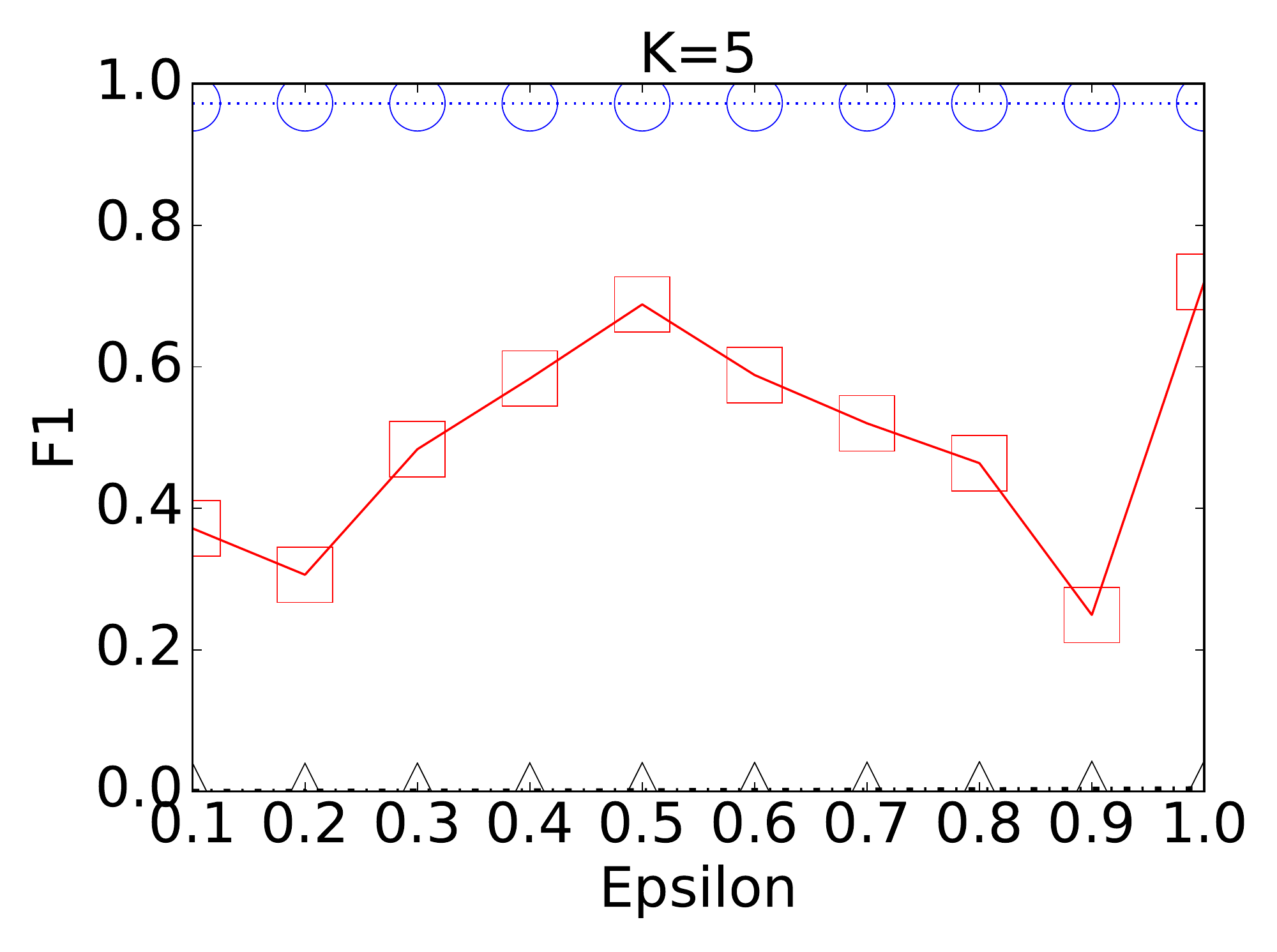}}\
\subfloat[BBC Sport]{\includegraphics[width = 0.3\linewidth]{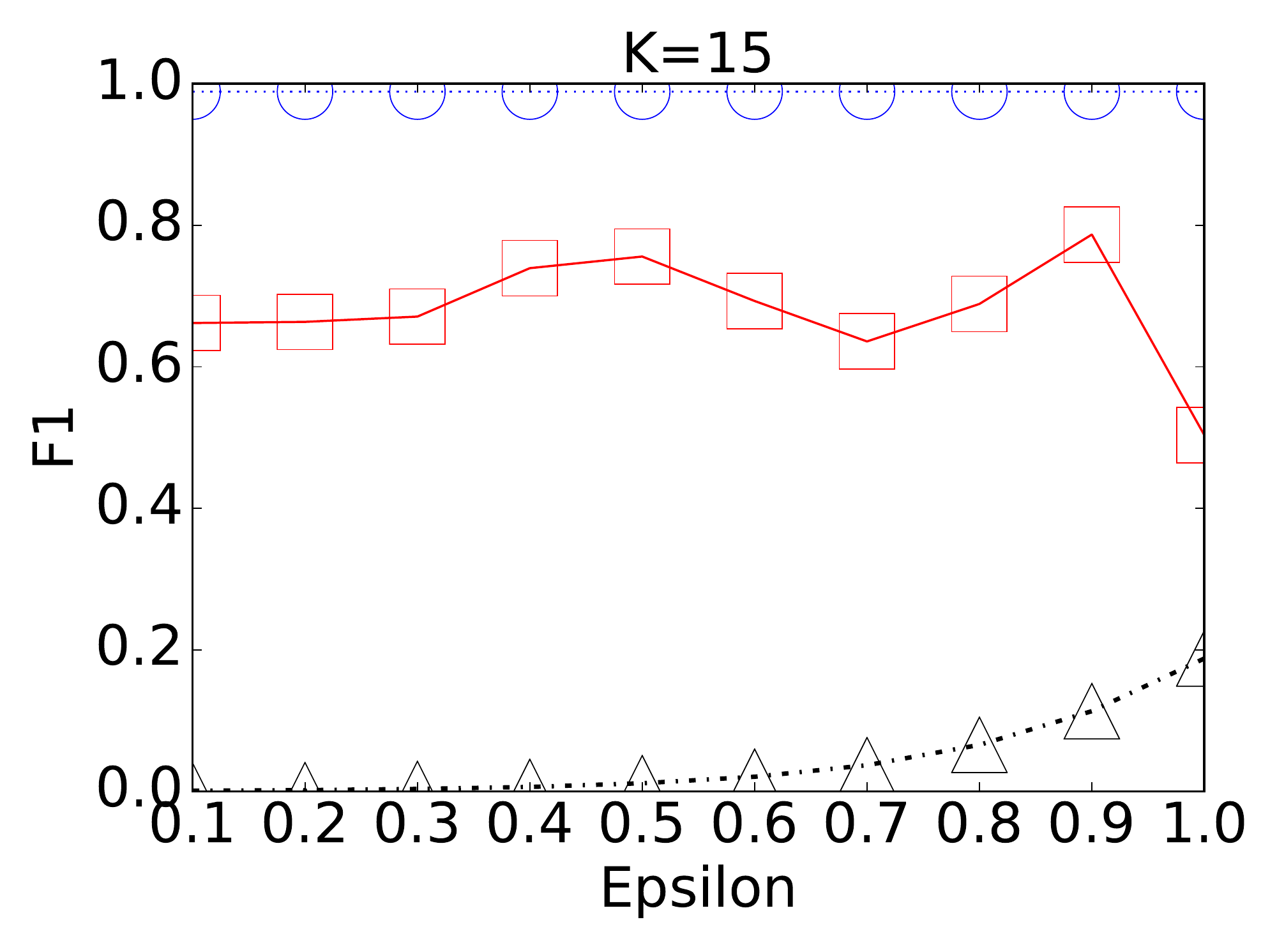}}\
\subfloat[BBC Sport]{\includegraphics[width = 0.3\linewidth]{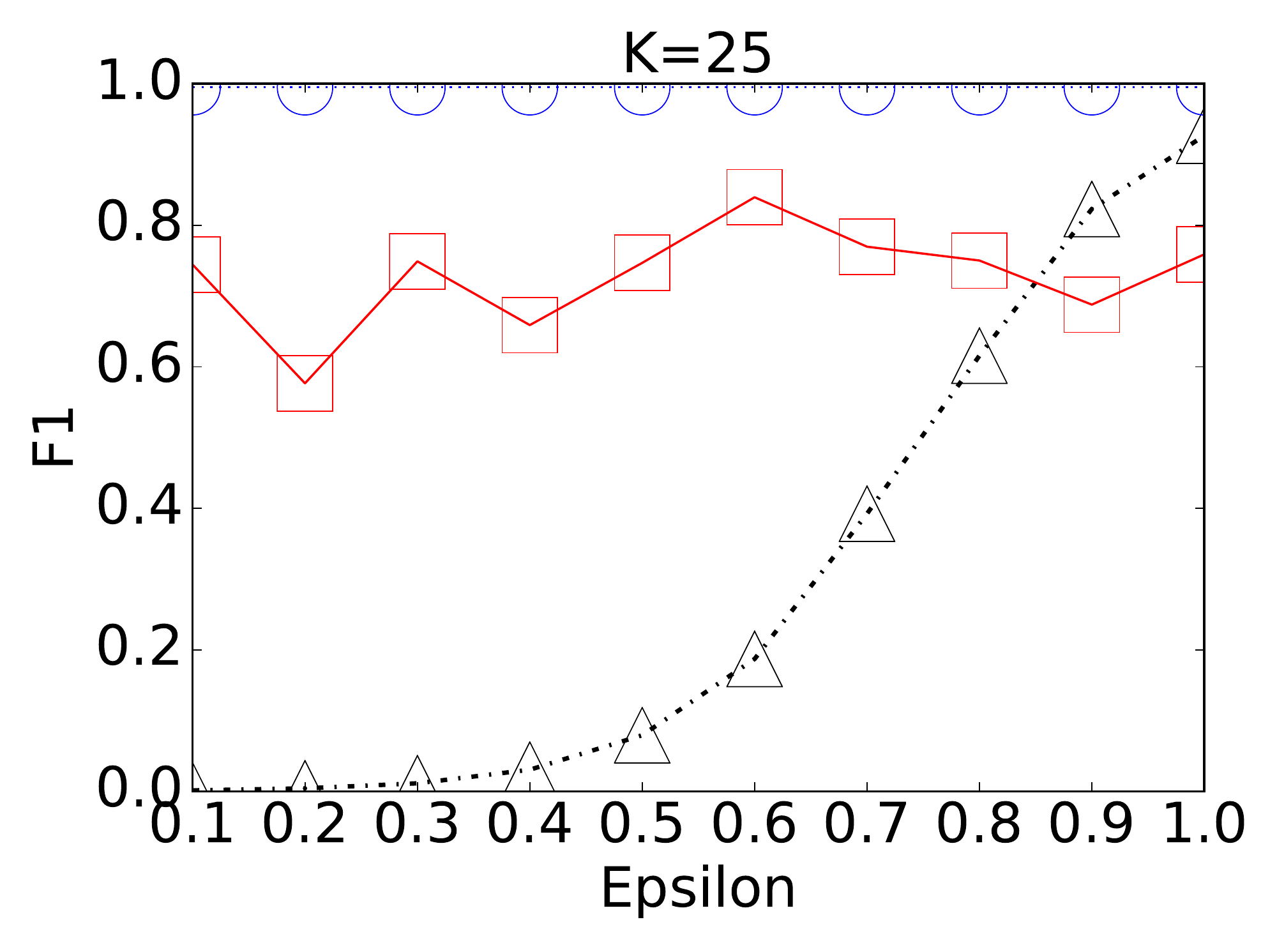}}\\
\subfloat[Opinosis]{\includegraphics[width = 0.3\linewidth]{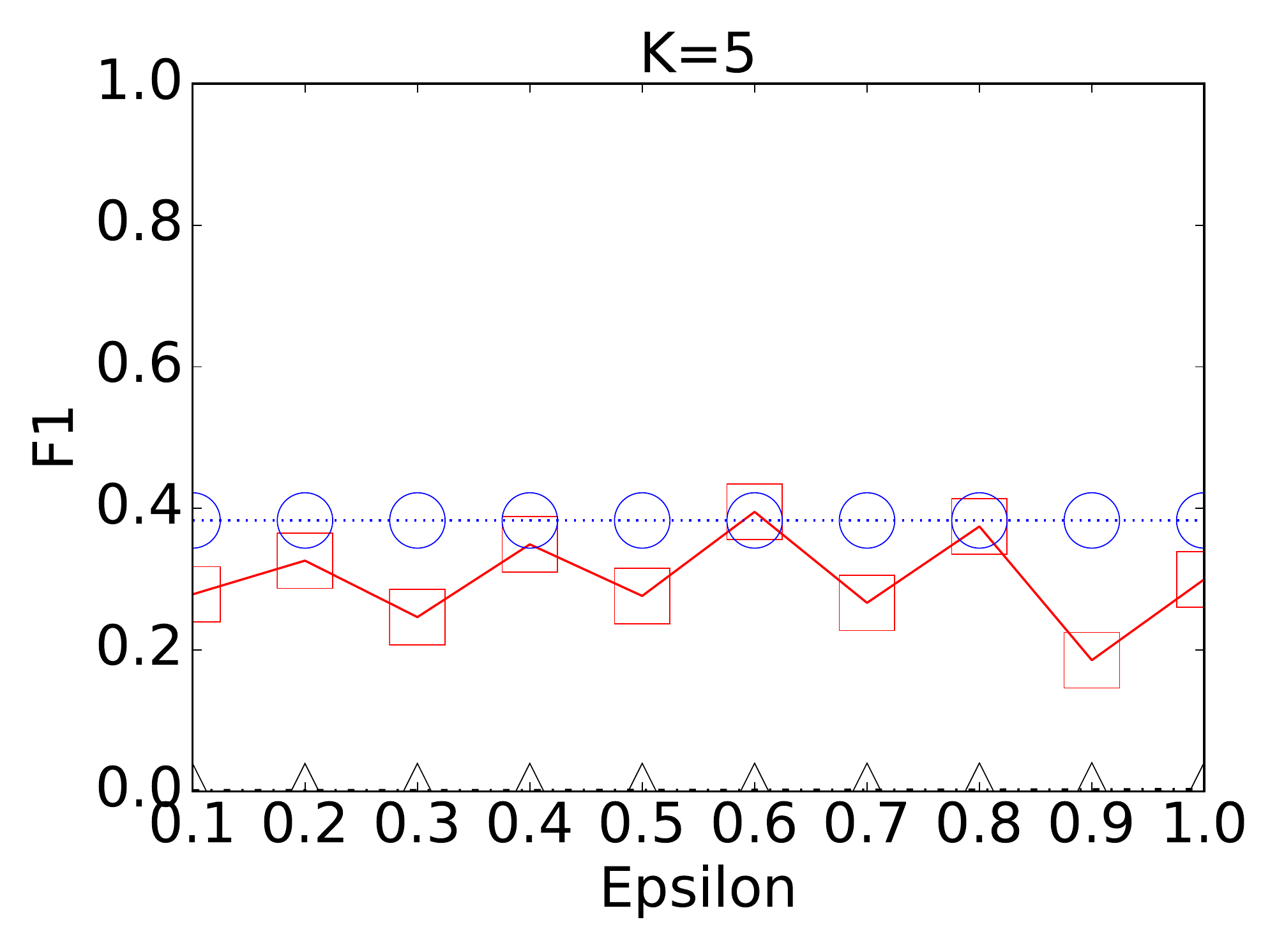}}\
\subfloat[Opinosis]{\includegraphics[width = 0.3\linewidth]{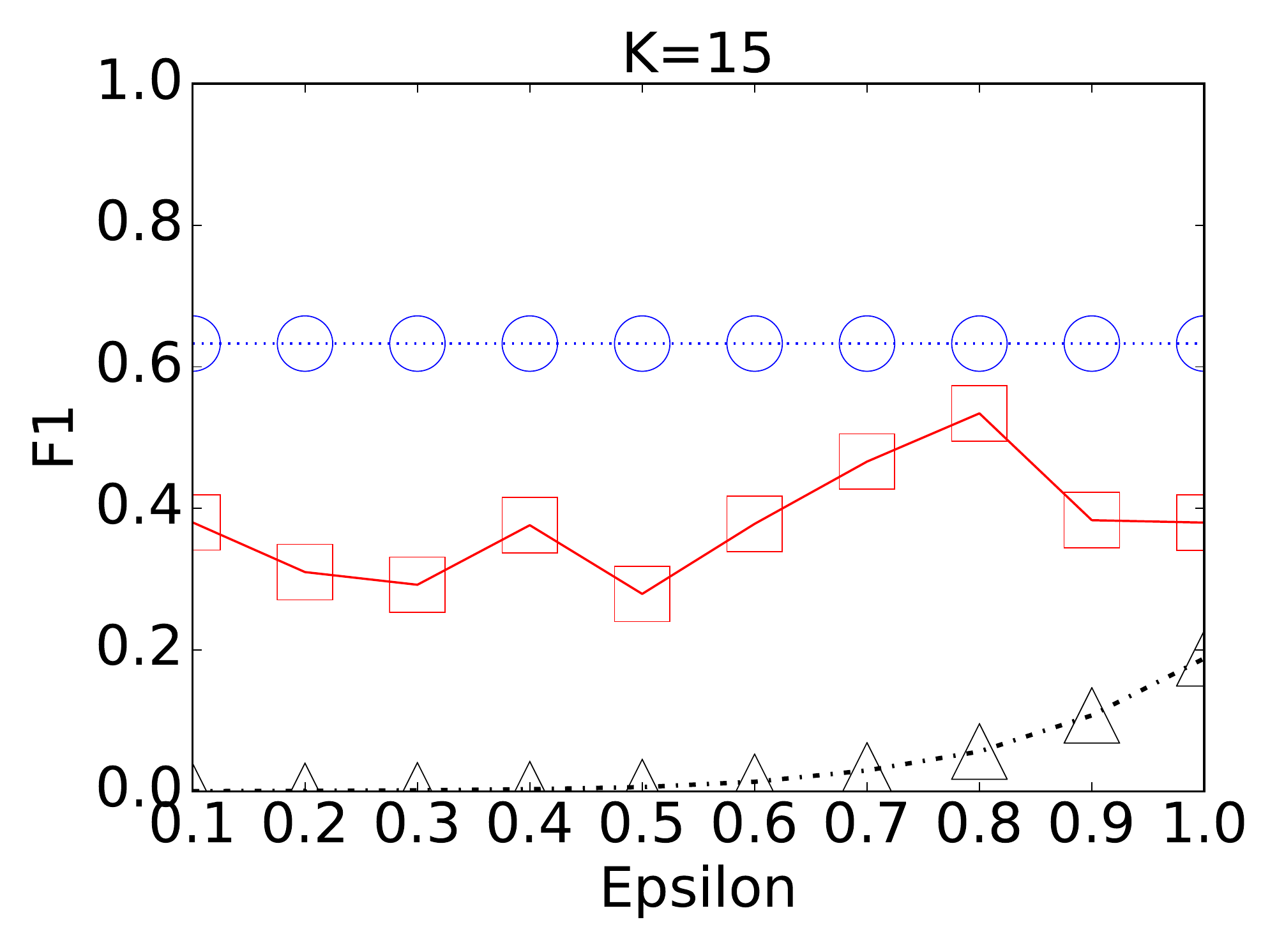}}\
\subfloat[Opinosis]{\includegraphics[width = 0.3\linewidth]{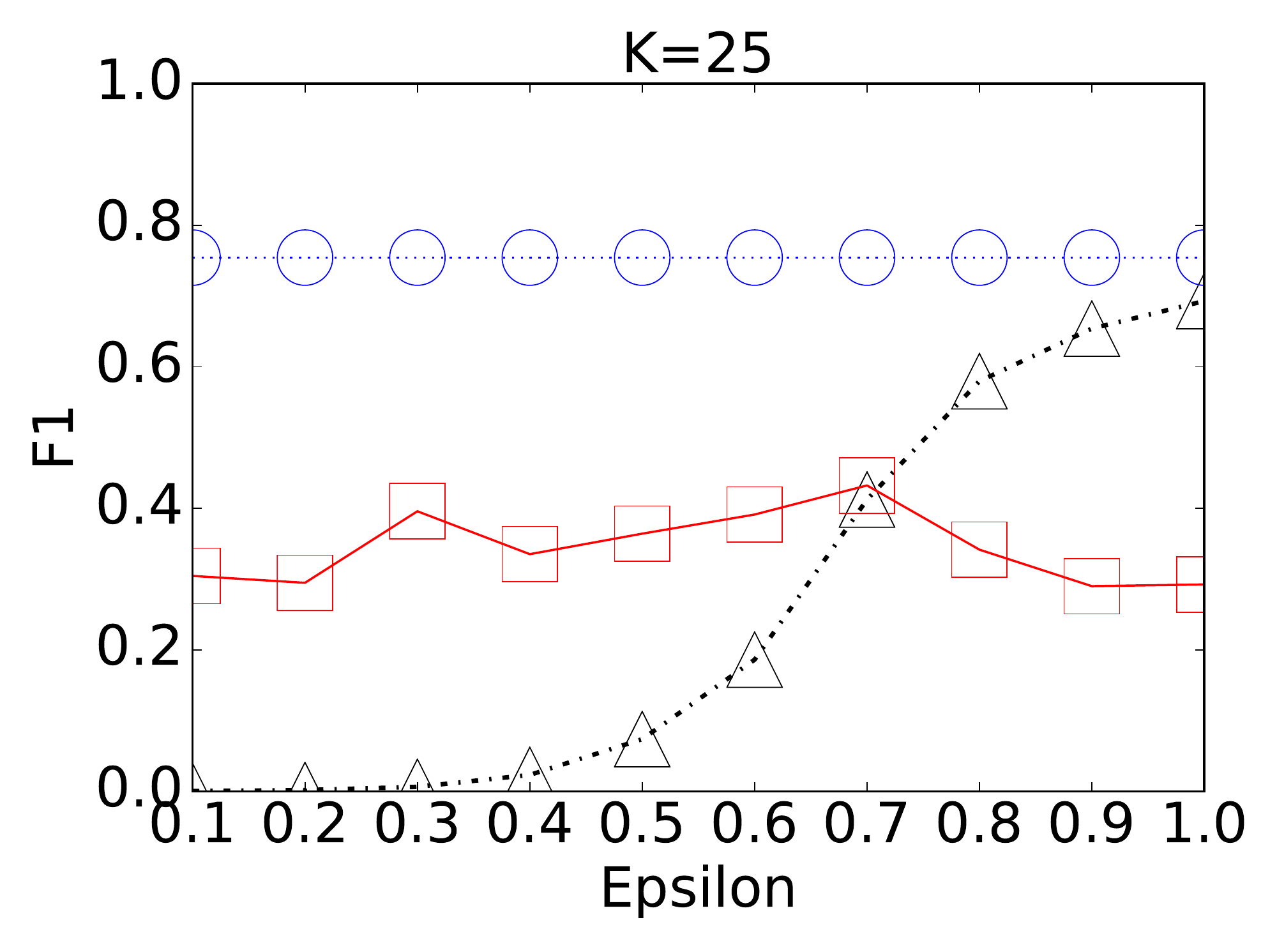}}\\
\subfloat[MSRP]{\includegraphics[width = 0.3\linewidth]{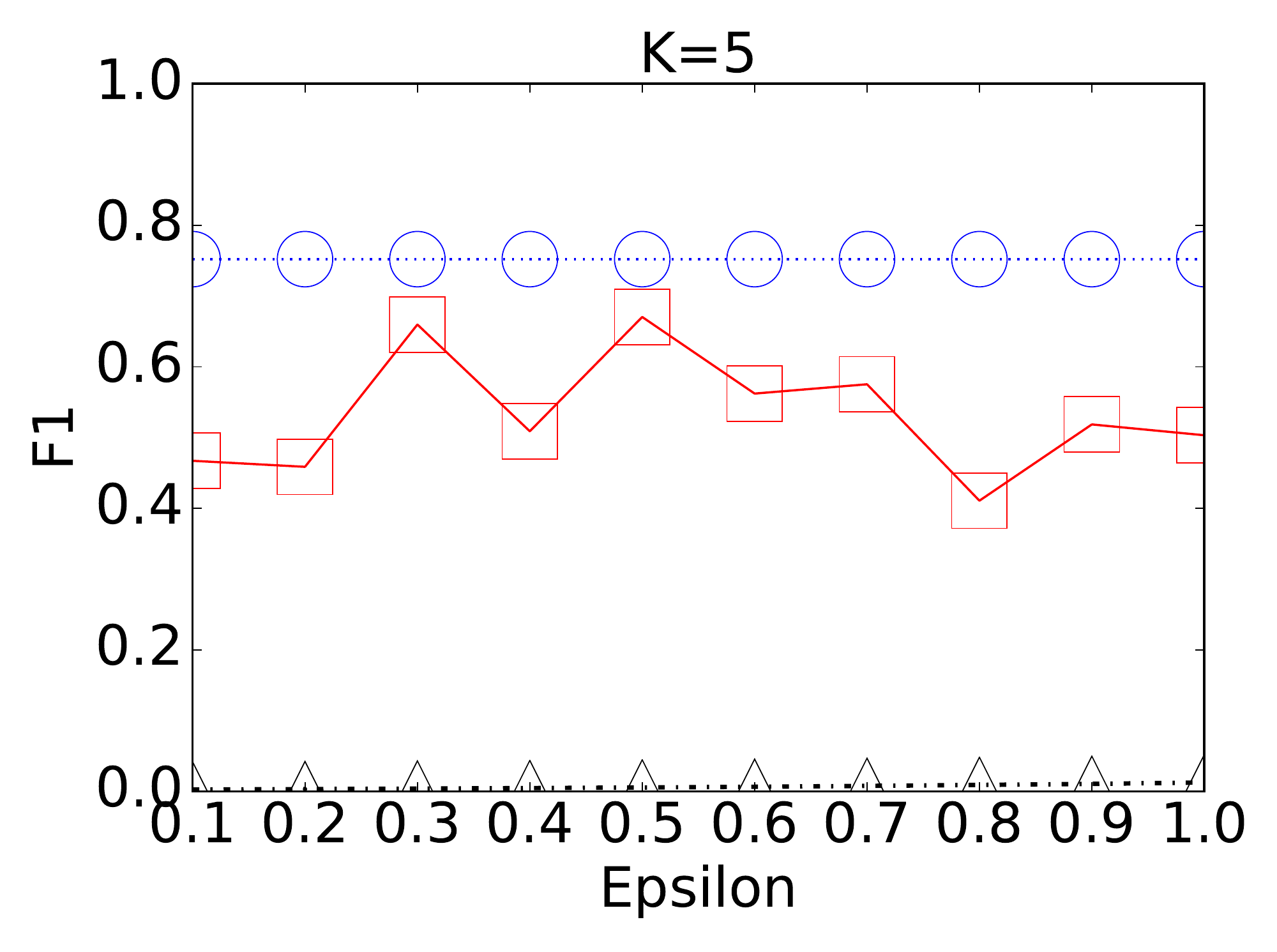}}\
\subfloat[MSRP]{\includegraphics[width = 0.3\linewidth]{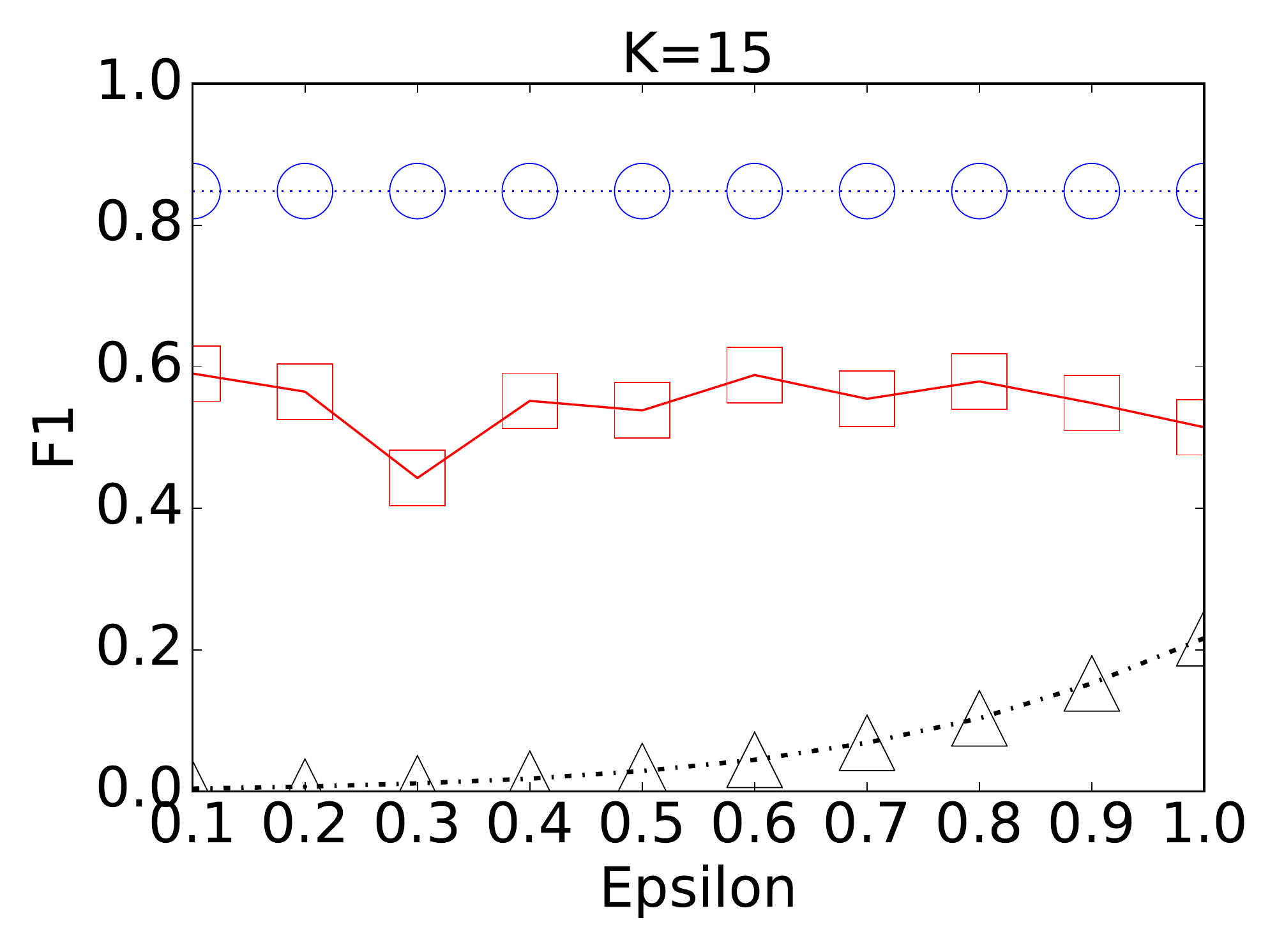}}\
\subfloat[MSRP]{\includegraphics[width = 0.3\linewidth]{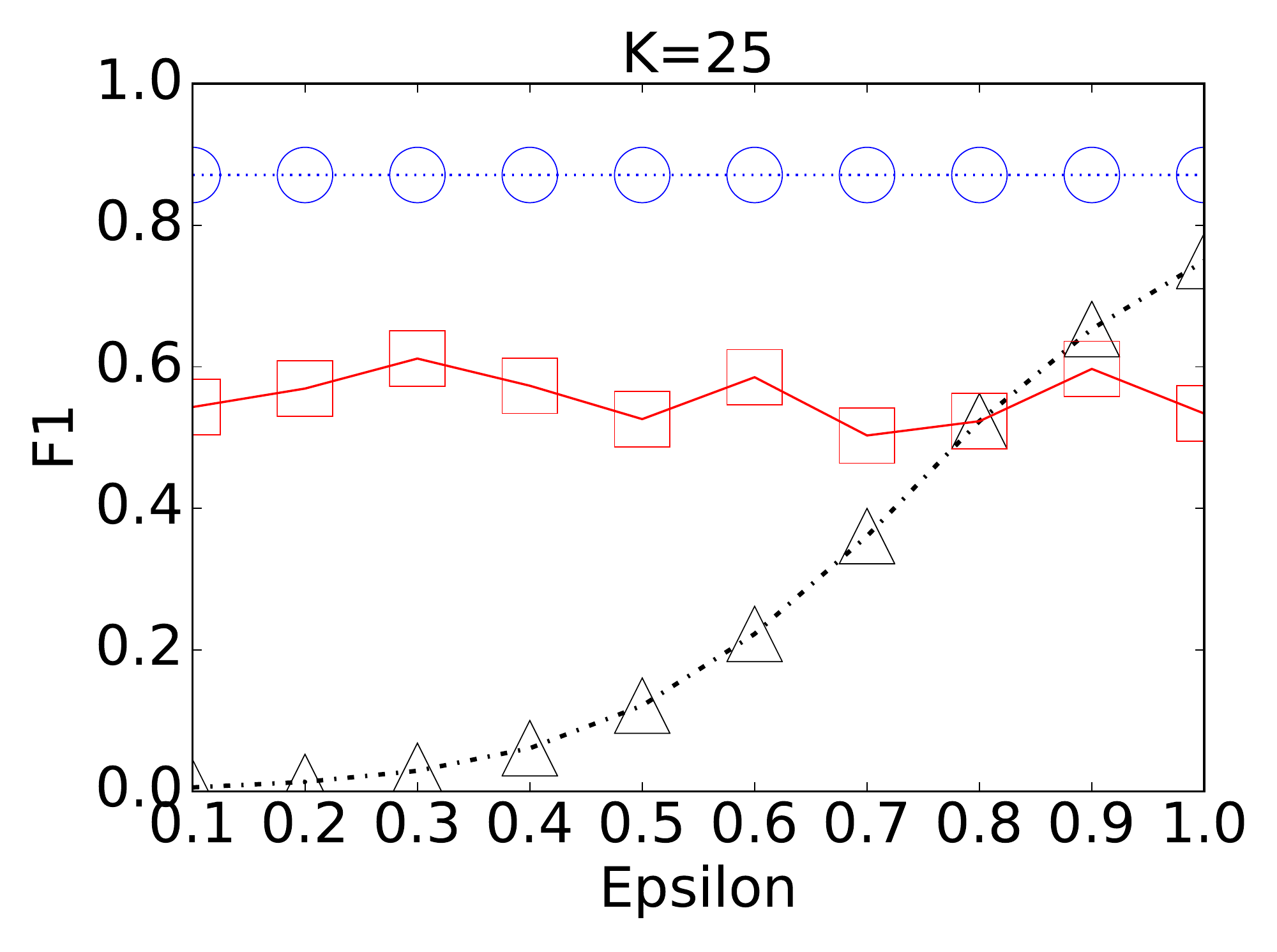}}\\
\caption{Varying $\epsilon$: F1}
\label{fig:varying-epsilon}
\end{figure*}

\begin{figure*}[htbp]
\centering
\includegraphics[scale=0.35]{fig_legend.pdf}\\
\subfloat[Alpine Dale]{\includegraphics[width = 0.3\linewidth]{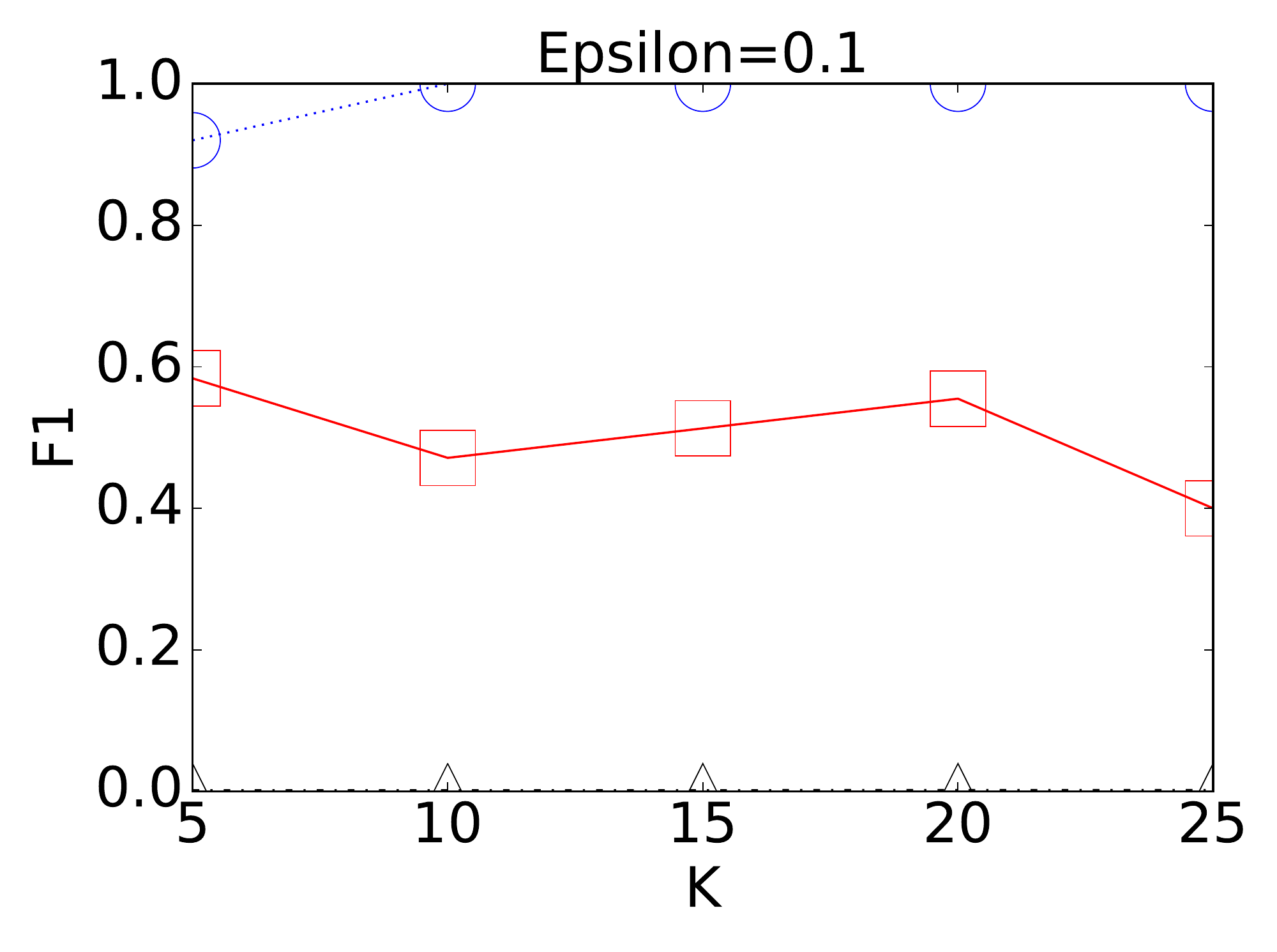}}\
\subfloat[Alpine Dale]{\includegraphics[width = 0.3\linewidth]{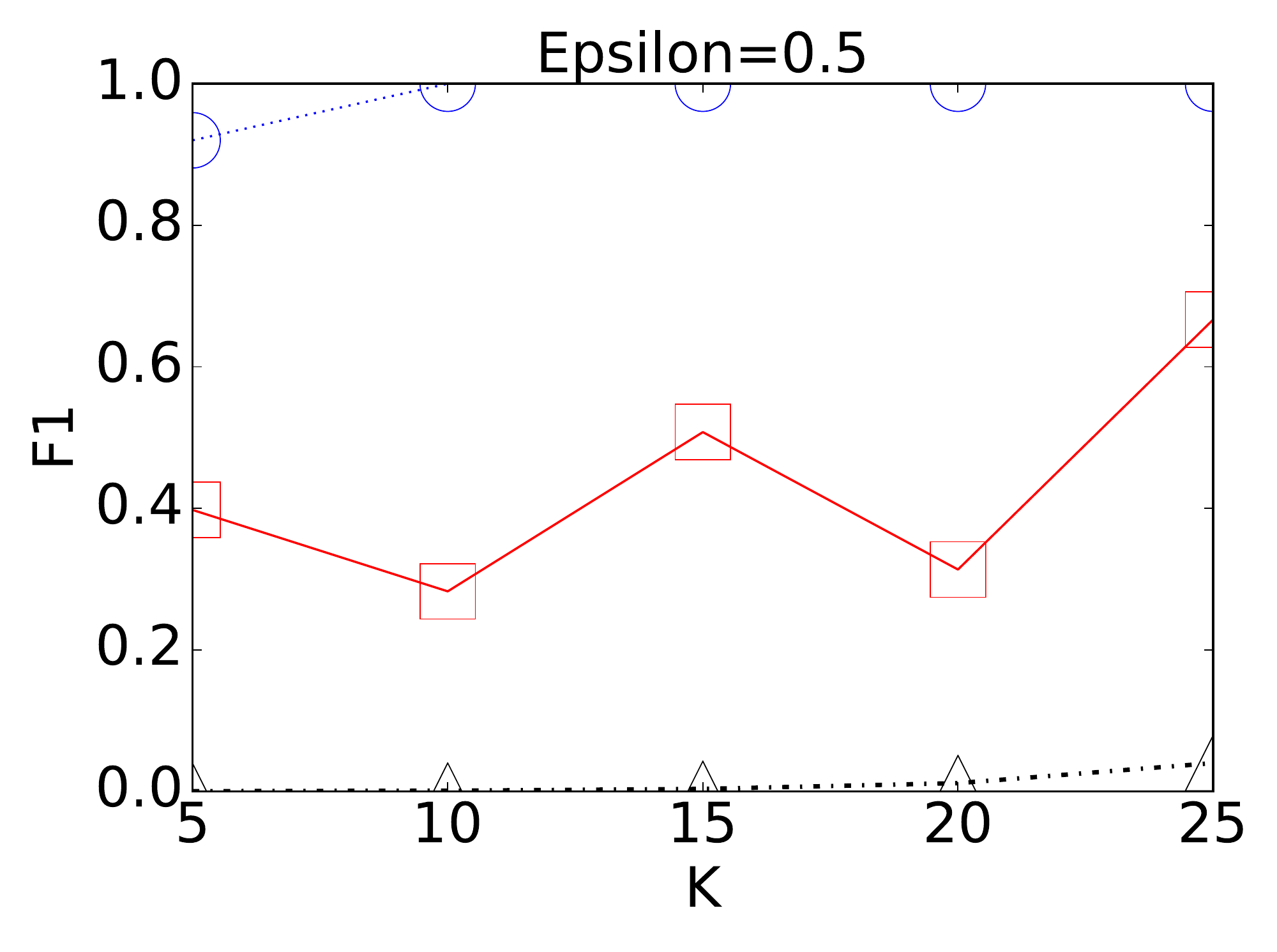}}\
\subfloat[Alpine Dale]{\includegraphics[width = 0.3\linewidth]{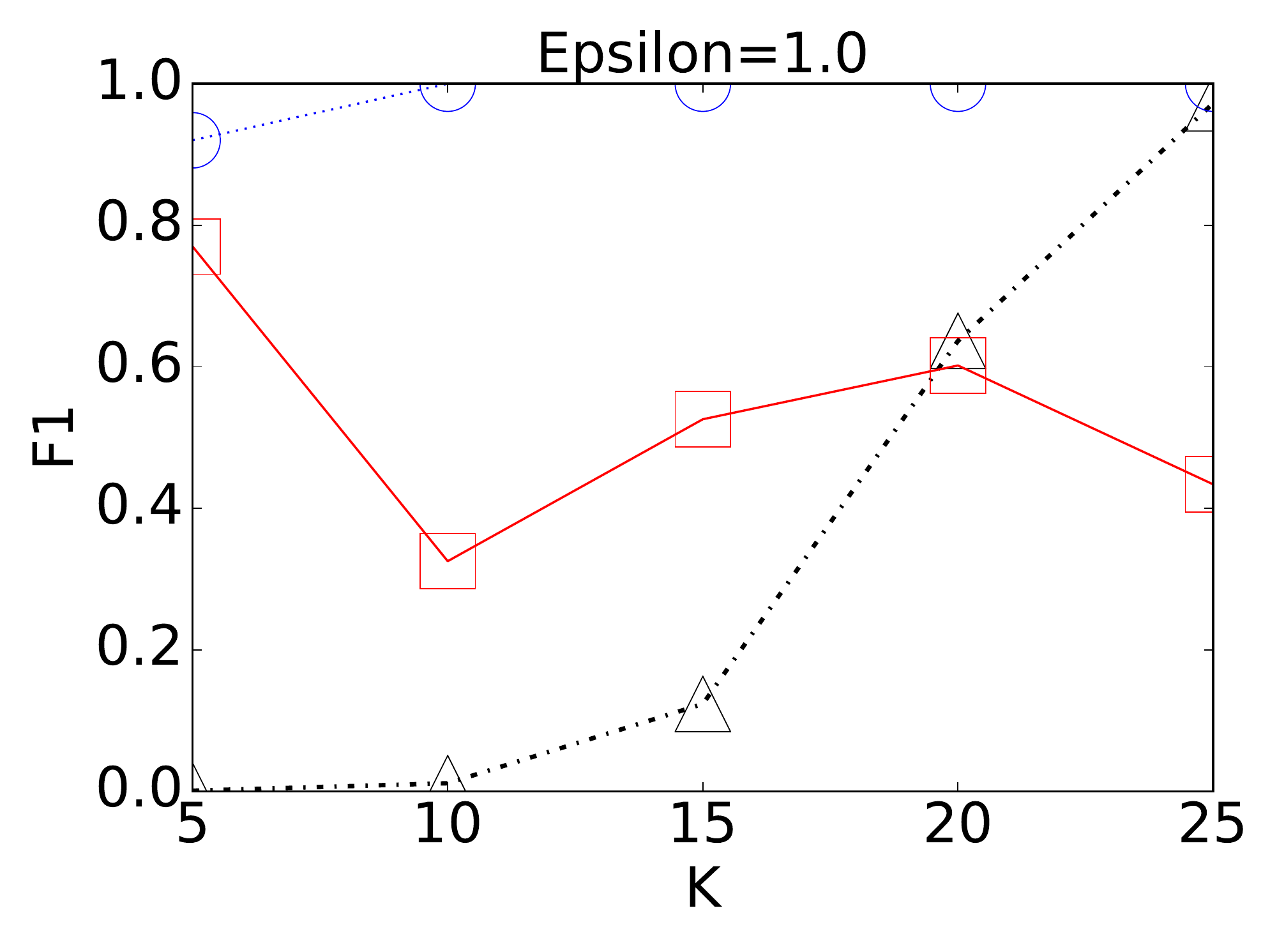}}\\
\subfloat[BBC Sport]{\includegraphics[width = 0.3\linewidth]{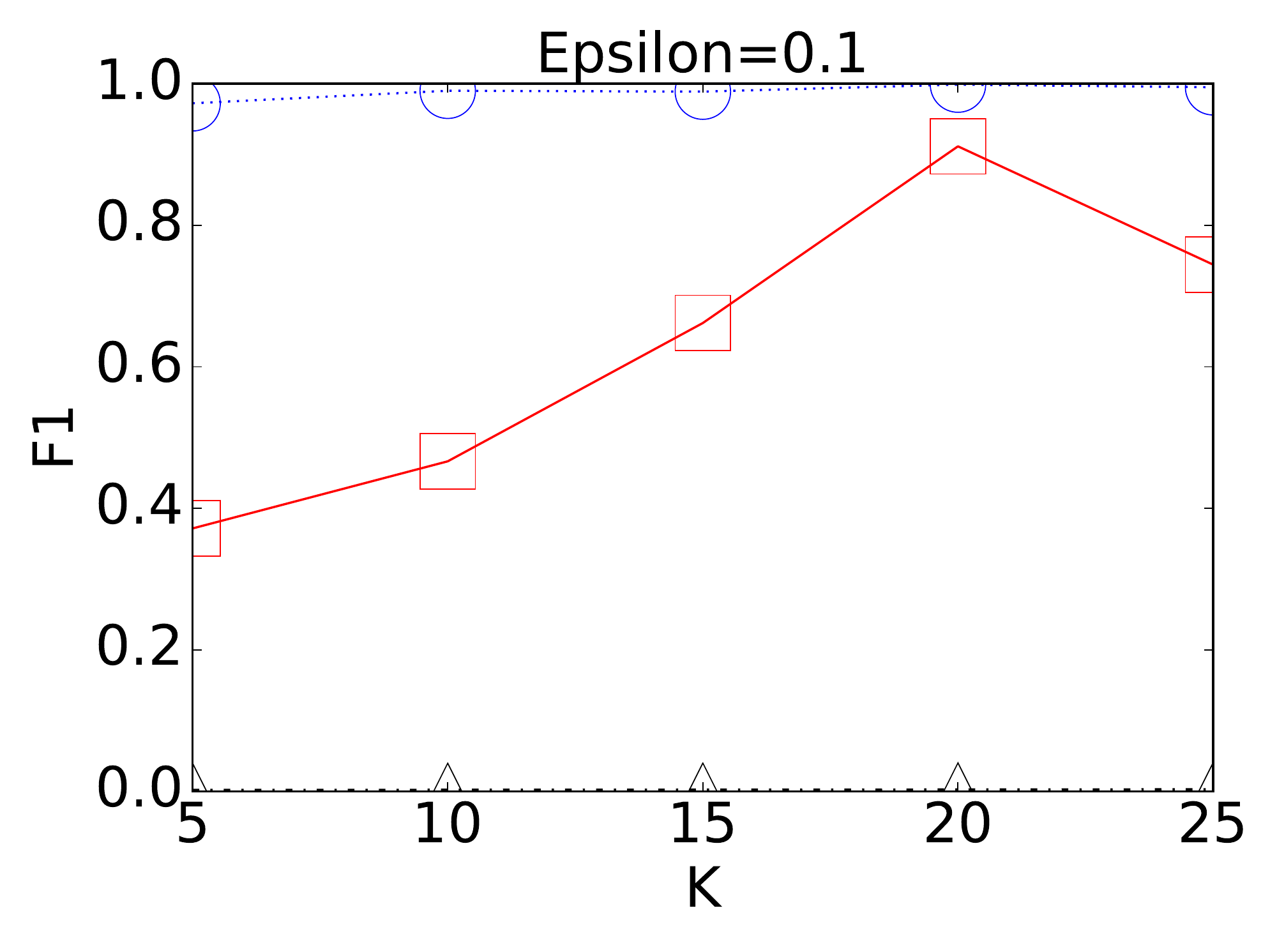}}\
\subfloat[BBC Sport]{\includegraphics[width = 0.3\linewidth]{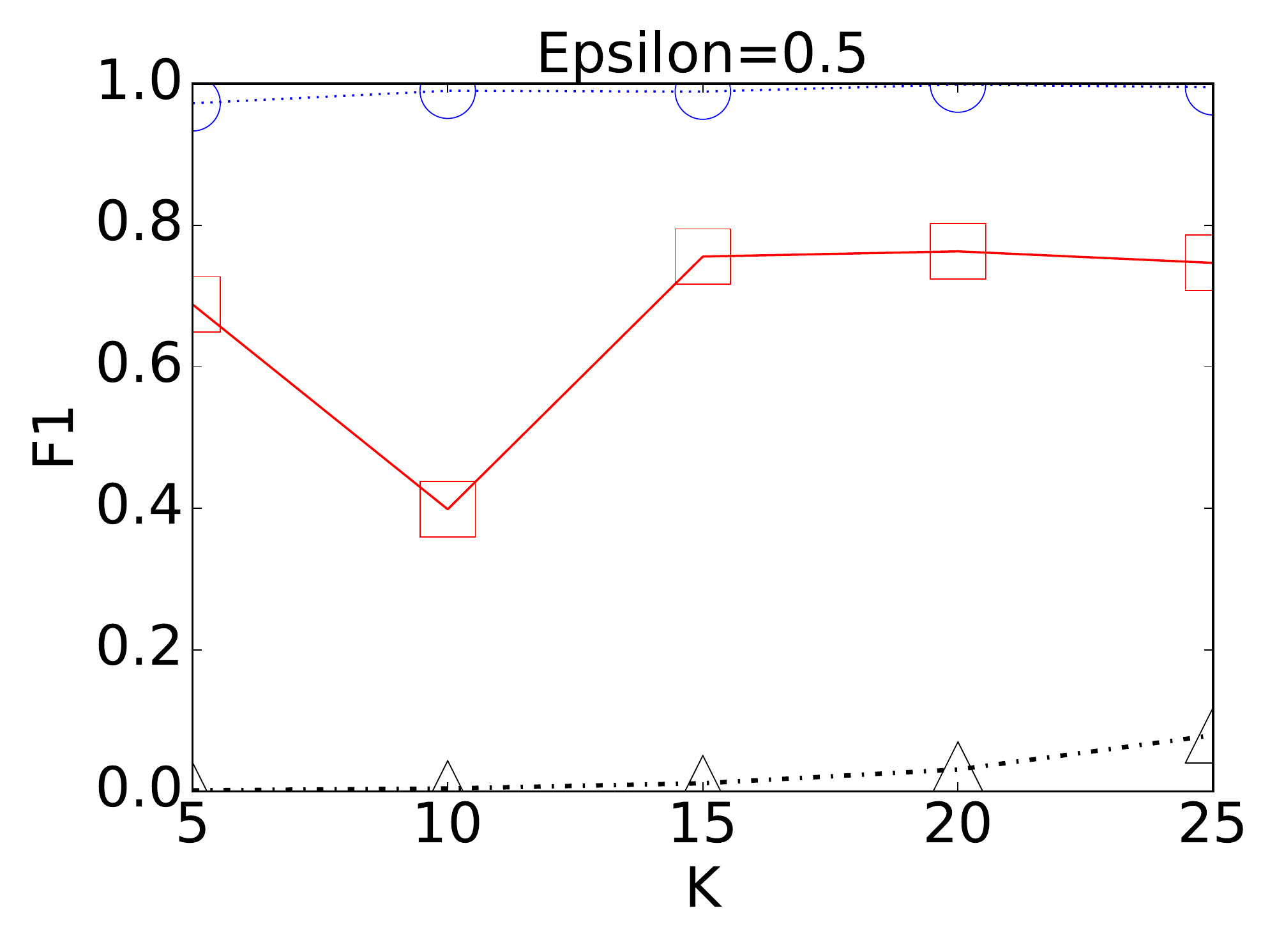}}\
\subfloat[BBC Sport]{\includegraphics[width = 0.3\linewidth]{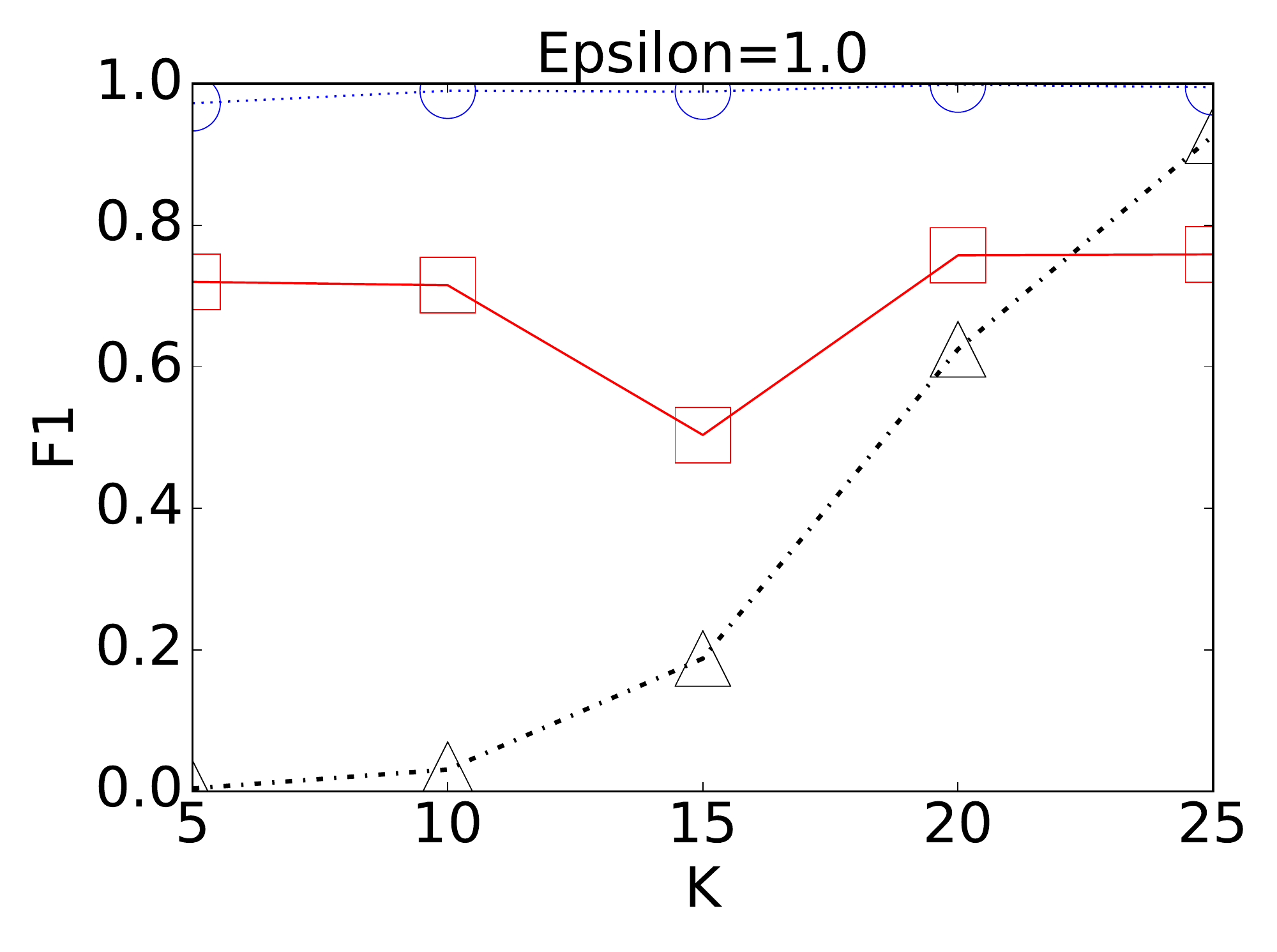}}\\
\subfloat[Opinosis]{\includegraphics[width = 0.3\linewidth]{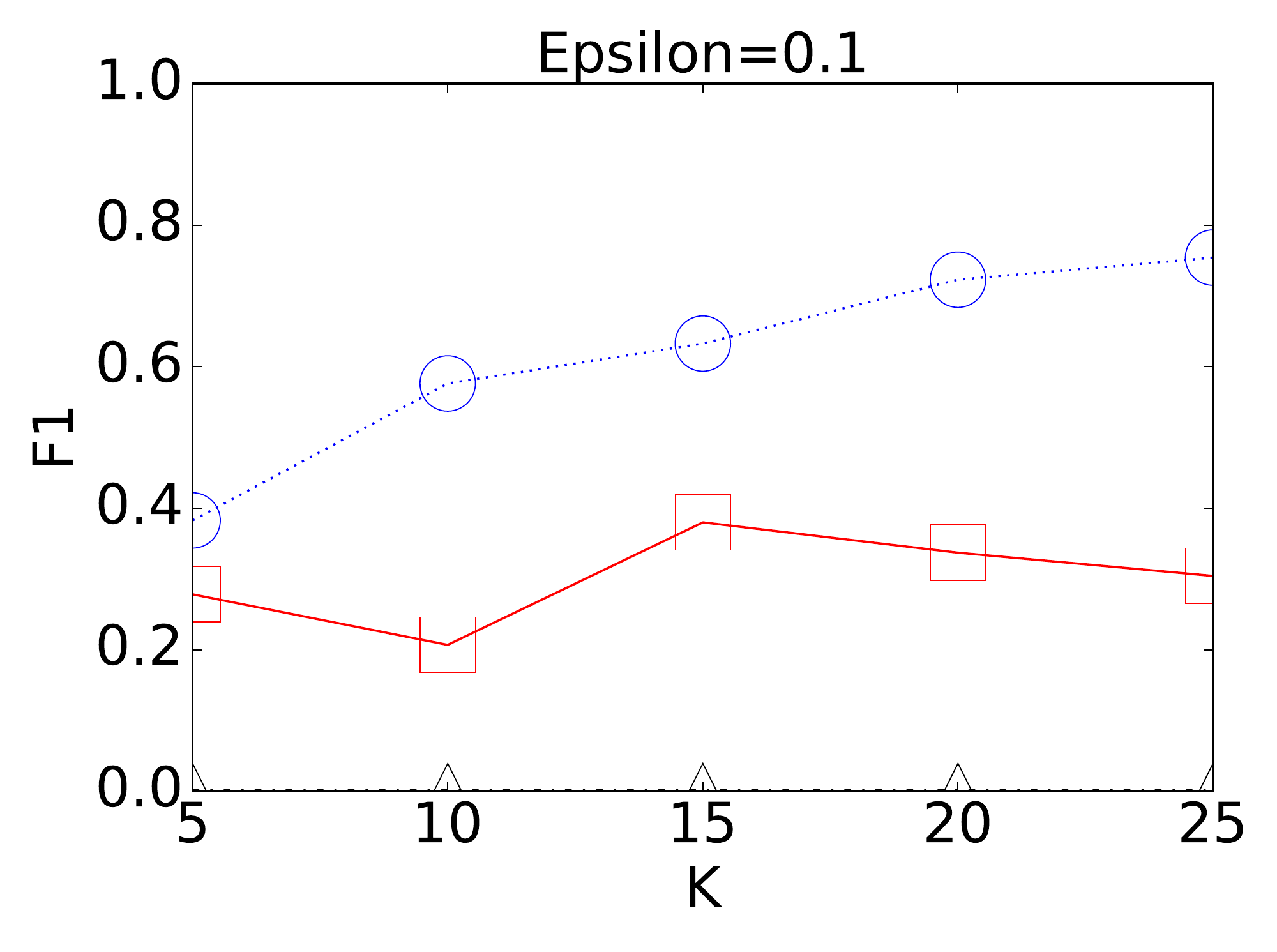}}\
\subfloat[Opinosis]{\includegraphics[width = 0.3\linewidth]{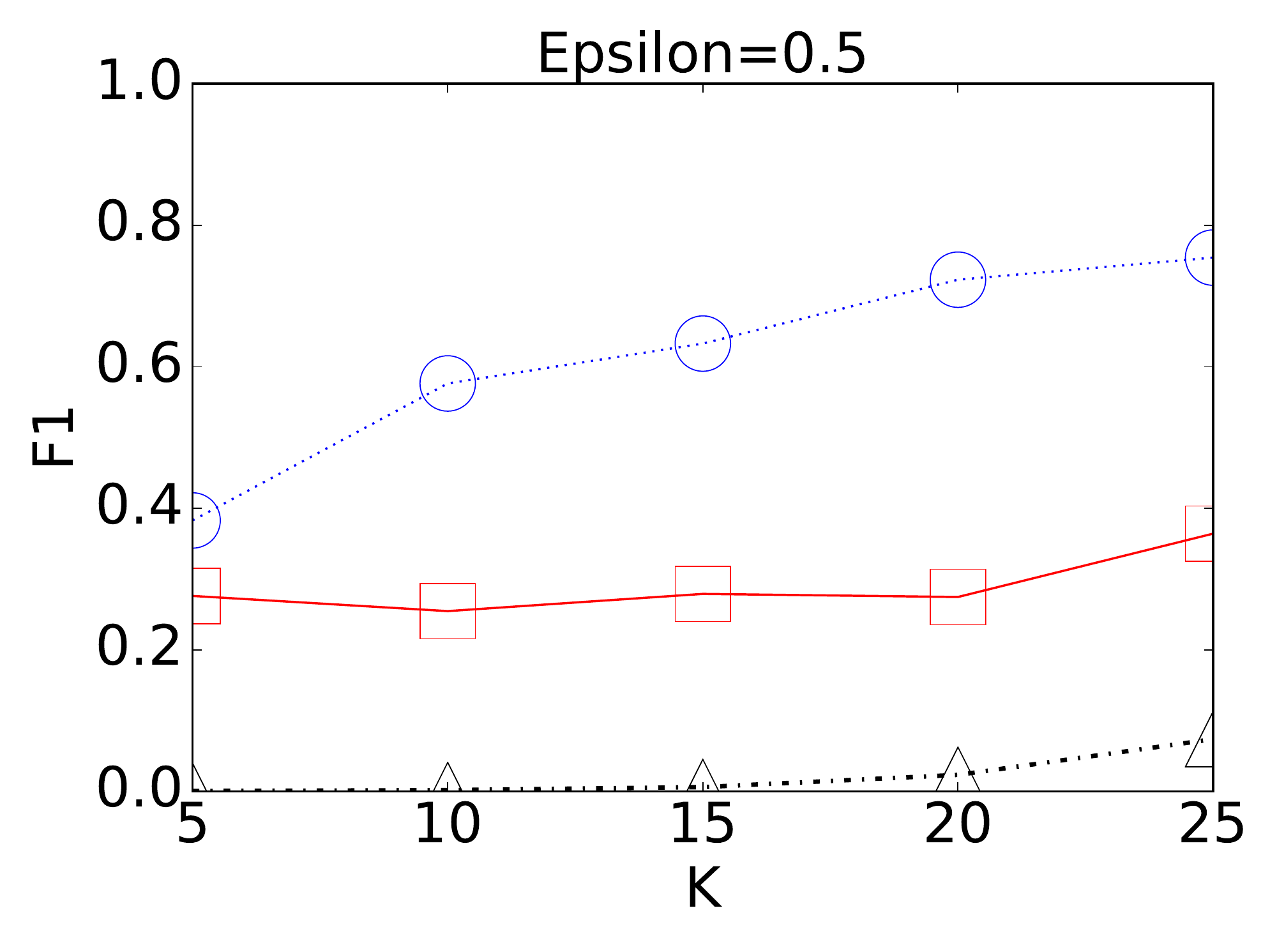}}\
\subfloat[Opinosis]{\includegraphics[width = 0.3\linewidth]{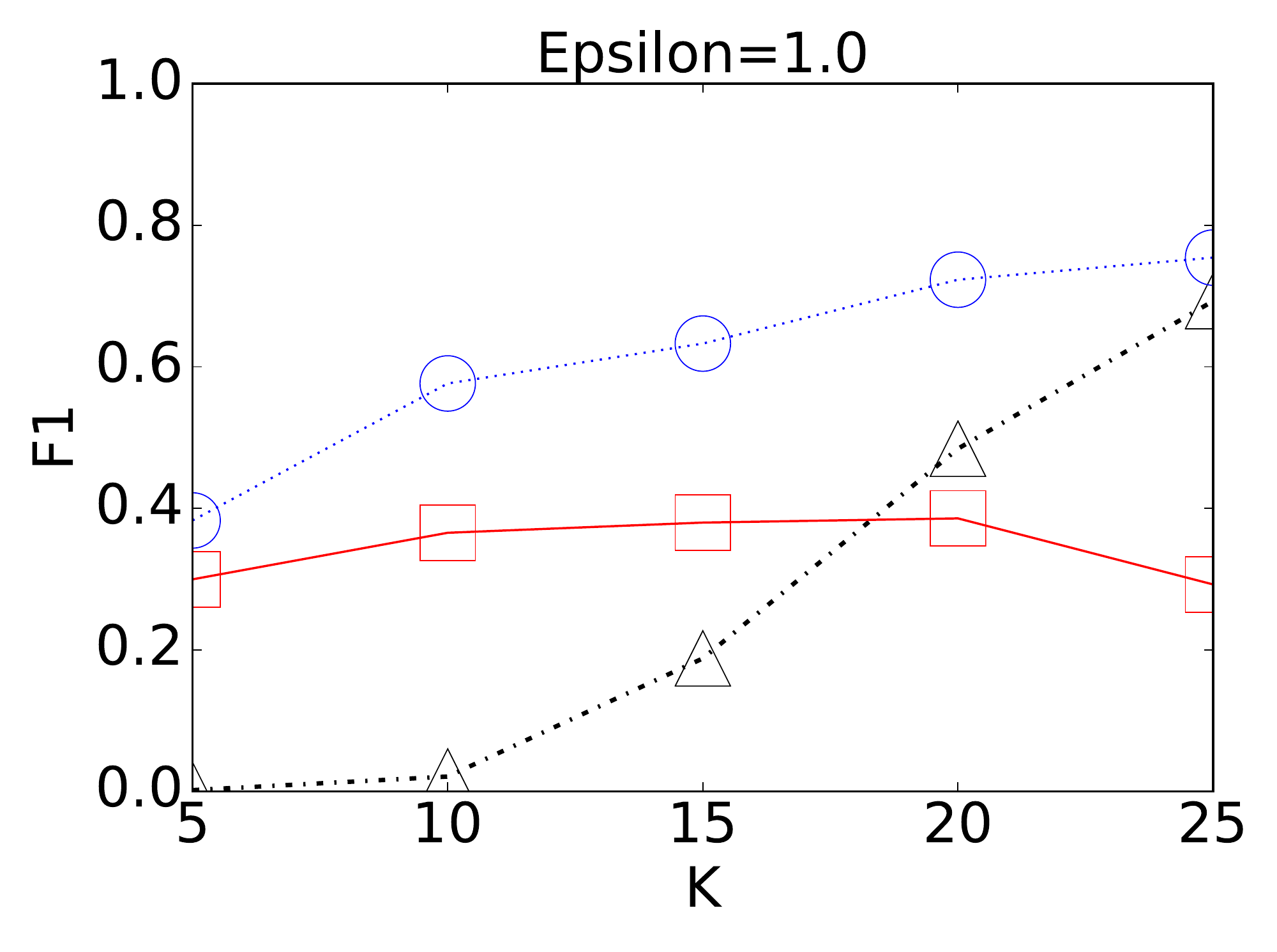}}\\
\subfloat[MSRP]{\includegraphics[width = 0.3\linewidth]{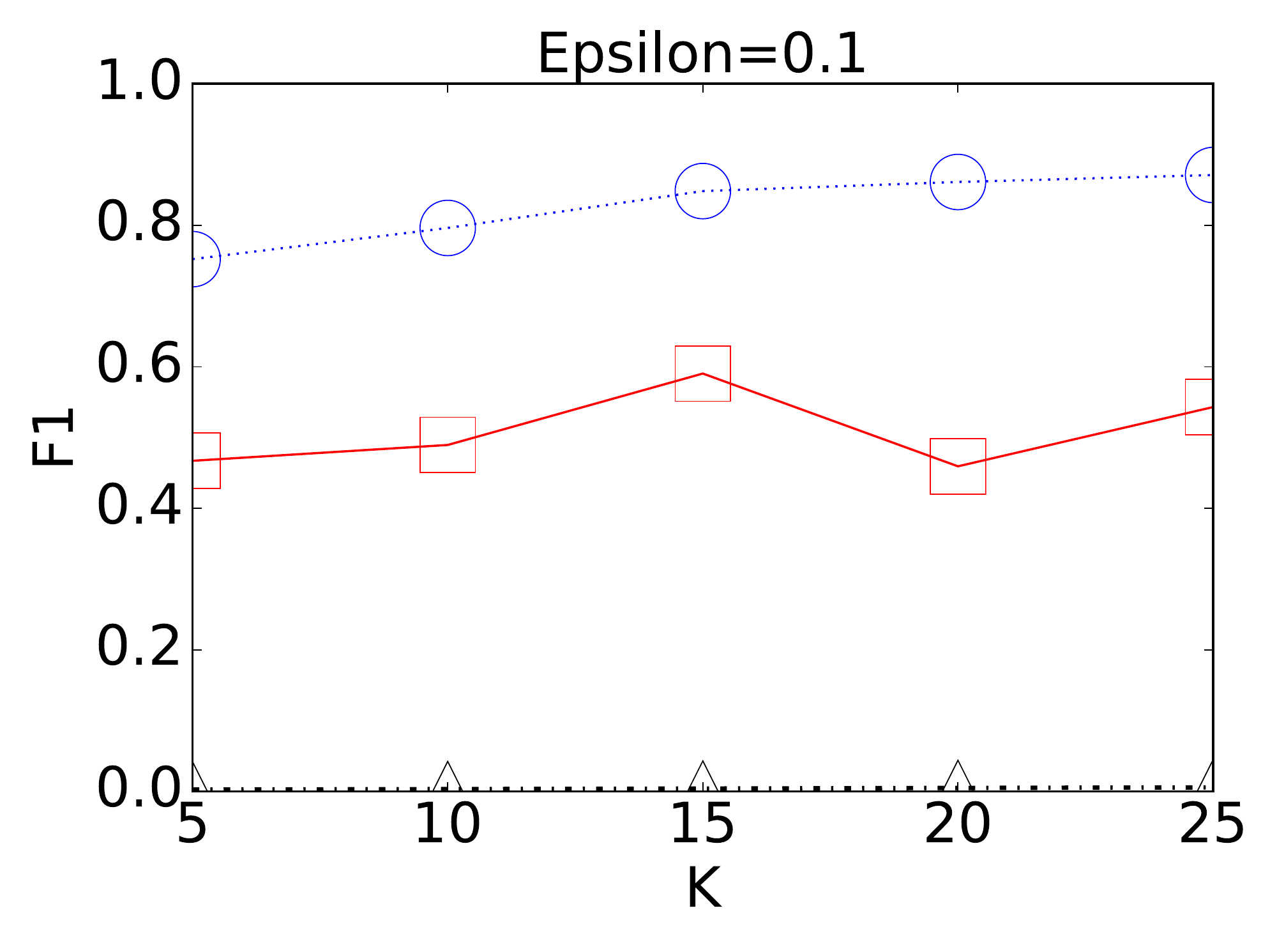}}\
\subfloat[MSRP]{\includegraphics[width = 0.3\linewidth]{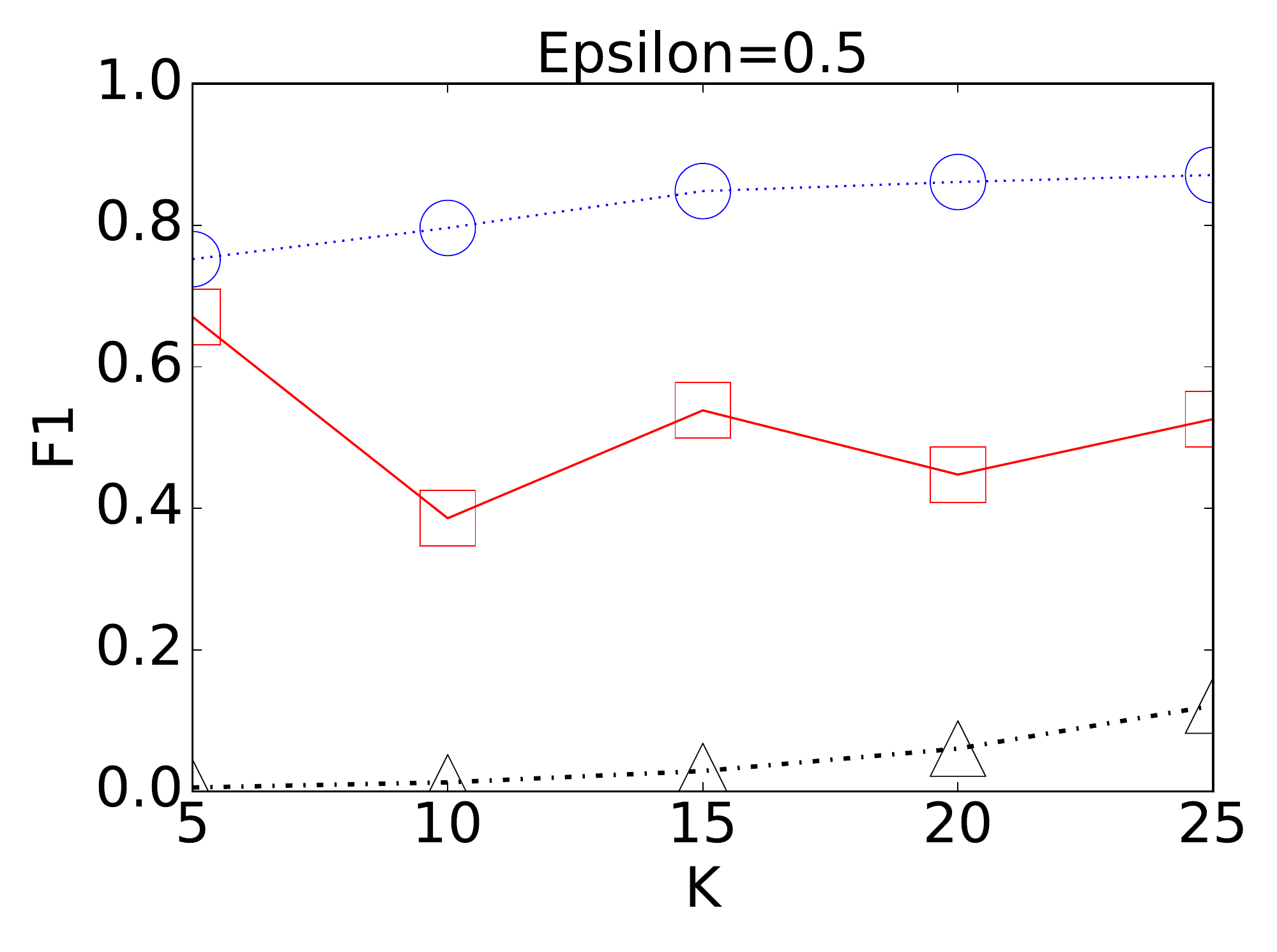}}\
\subfloat[MSRP]{\includegraphics[width = 0.3\linewidth]{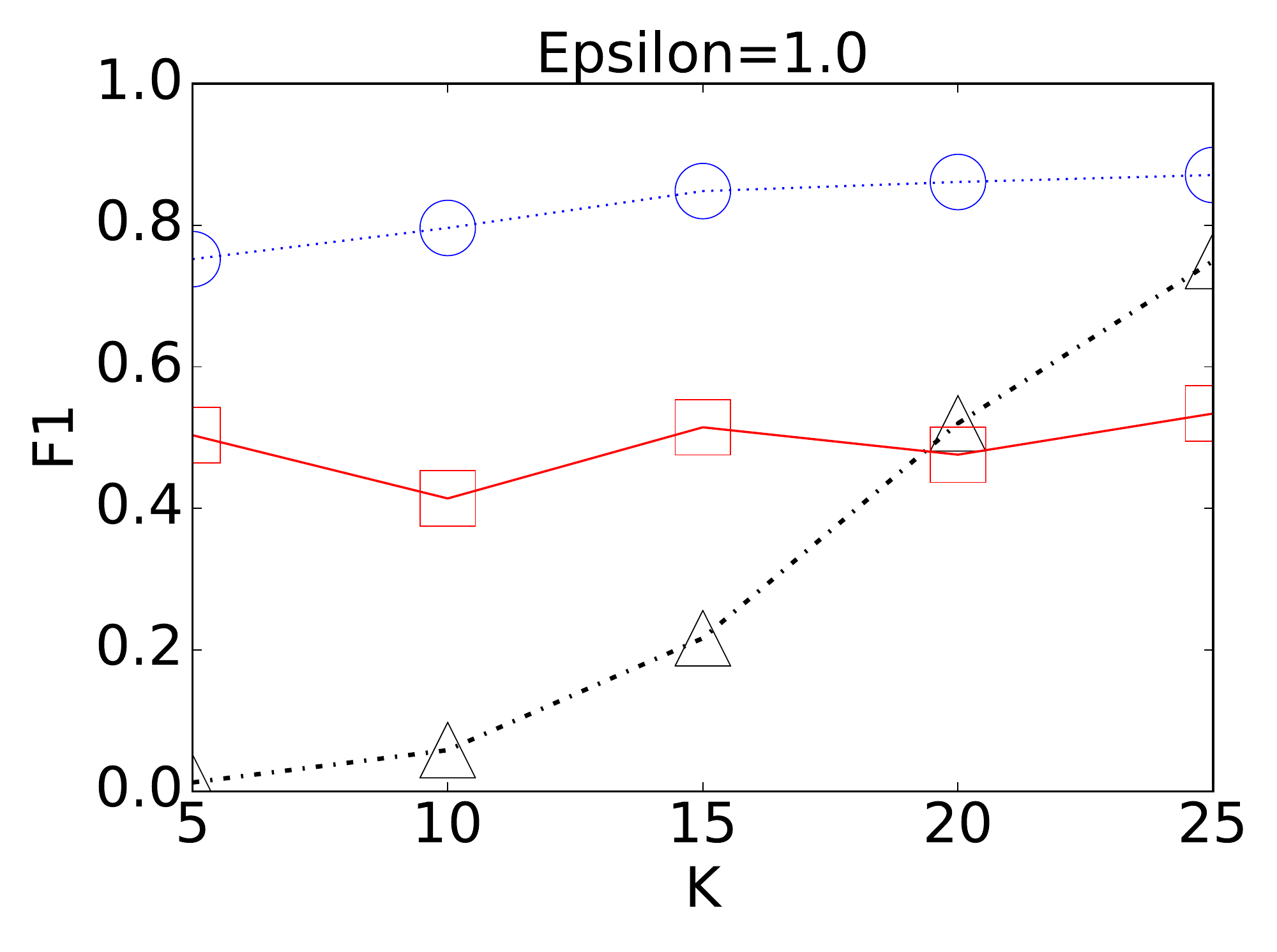}}\\
\caption{Varying $K$: F1}
\label{fig:varying-k}
\end{figure*}

\begin{table*}[htbp]
\scriptsize
\centering
\caption{Comparison of the MSE on Different Algorithms}
\label{tab:mse-comparison}
\renewcommand\arraystretch{1.3}
\begin{tabular}{ | p{1.7cm}<{\centering} | p{1cm}<{\centering} | p{1.1cm}<{\centering} | p{1.1cm}<{\centering} | p{1.1cm}<{\centering} | p{1.1cm}<{\centering} | p{1.1cm}<{\centering} | p{1.1cm}<{\centering} | p{1.1cm}<{\centering} | p{1.1cm}<{\centering} | p{1.1cm}<{\centering} |}
	\hline
    \multirow{2}{*}{\textbf{Parameters}}
    & $K$ & \multicolumn{3}{c|}{5} & \multicolumn{3}{c|}{15} & \multicolumn{3}{c|}{25} \\ \cline{2-11}	
    & $\epsilon$ & 0.01 & 0.1 & 1 & 0.01 & 0.1 & 1 & 0.01 & 0.1 & 1 \\ \hline
	\hline
    \multirow{3}{*}{\textbf{Alpine Dale}}
    & Baseline & 801.6931 & 8.003136 & 0.080065 & 88.86992 & 0.890652 & 0.008917 & 32.03135 & 0.320059 & 0.003222 \\ \cline{2-11}
    & $\texttt{PrivMin}$ & \textbf{0.000055} & 0.000060 & 0.000063 & \textbf{0.000024} & 0.000026 & 0.000026 & \textbf{0.000016} & 0.000017 & \textbf{0.000016} \\ \cline{2-11}
    & $\texttt{MH-JSC}$ & \multicolumn{3}{c|}{0.000117} & \multicolumn{3}{c|}{0.000031} & \multicolumn{3}{c|}{0.000022} \\ \hline
	\hline
    \multirow{3}{*}{\textbf{BBC Sport}}
    & Baseline & 798.8903 & 8.030287 & 0.080235 & 88.95124 & 0.89008 & 0.00896 & 31.94539 & 0.320657 & 0.003230 \\ \cline{2-11}	
    & $\texttt{PrivMin}$ & 0.000090 & \textbf{0.000088} & 0.000096 & 0.000065 & \textbf{0.000060} & 0.000062 & 0.000046 & \textbf{0.000044} & 0.000056 \\ \cline{2-11}	
    & $\texttt{MH-JSC}$ & \multicolumn{3}{c|}{0.000156} & \multicolumn{3}{c|}{0.000044} & \multicolumn{3}{c|}{0.000023} \\ \hline
    \hline
    \multirow{3}{*}{\textbf{Opinosis}}
    & Baseline & 800.8486 & 7.991925 & 0.080309 & 88.57464 & 0.888713 & 0.008927 & 32.03530 & 0.319903 & 0.003221 \\ \cline{2-11}
    & $\texttt{PrivMin}$ & 0.000100 & \textbf{0.000099} & 0.000106 & 0.000048 & 0.000053 & \textbf{0.000042} & 0.000037 & \textbf{0.000035} & \textbf{0.000035} \\ \cline{2-11}
    & $\texttt{MH-JSC}$ & \multicolumn{3}{c|}{0.000120} & \multicolumn{3}{c|}{0.000040} & \multicolumn{3}{c|}{0.000025} \\ \hline
    \hline
    \multirow{3}{*}{\textbf{MSRP}}
    & Baseline & 798.3747 & 7.996413 & 0.080062 & 89.04559 & 0.887925 & 0.008916 & 31.98979 & 0.320710 & 0.003216 \\ \cline{2-11}
    & $\texttt{PrivMin}$ & 0.000135 & 0.000114 & \textbf{0.000108} & \textbf{0.000099} & 0.000124 & 0.000101 & 0.000084 & 0.000089 & \textbf{0.000073} \\ \cline{2-11}
    & $\texttt{MH-JSC}$ & \multicolumn{3}{c|}{0.000095} & \multicolumn{3}{c|}{0.000031} & \multicolumn{3}{c|}{0.000018} \\ \hline
\end{tabular}
\end{table*}

\subsubsection{Impact of Hash Function Number}
\label{sec-impact-of-hash-fuction-number}

For the \emph{$F1$ Score},
we fix $\epsilon=0.1,0.5,1.0$ and
vary the size $K$ of a single MinHash signature,
to study its impact on the utility of each algorithm.
The results are shown in Fig.~\ref{fig:varying-k}.
As expected,
for the Baseline algorithm its utility measures increase
when $K$ increases.
We also observe that the \texttt{PrivMin} algorithm
is not clearly affected by the changing $K$.

For the \emph{Mean Squared Error} (MSE),
Table~\ref{tab:mse-comparison} shows that
the \texttt{PrivMin} algorithm generally outperforms the
\emph{Baseline} algorithm with the changing $K$.
And in some conditions,
it also maintains a better utility compared with
the \texttt{MH-JSC}.

\subsubsection{Summary and Recommendations}

Remarkably,
although the \emph{Baseline} algorithm can achieve
$\epsilon$-differential privacy
by adopting the \emph{Laplace} mechanism,
it can hardly maintain an acceptable \emph{$F1$ Score}
unless both $\epsilon$ and $K$ are large
(e.g.,
$\epsilon \geq 1.0$ and $K \geq 20$,
or $\epsilon \geq 0.8$ and $K \geq 25$ in experimental results
shown in Fig.~\ref{fig:varying-k} and Fig.~\ref{fig:varying-epsilon}).
This is because that
in the design of the \emph{Baseline} algorithm,
the \texttt{MH-JSC} is directly considered as a ``black box'',
and the noise adding is applied on this box's output
without exploiting the minimum hash value computation process
inside the box.
In contrast,
the intuition behind the \texttt{PrivMin} algorithm
is that introducing the \emph{Exponential} noise to the
minimum hash value computation process
of MinHashing phase in a privately randomized way,
which helps the proposed algorithm achieve both
$\epsilon$-differential privacy and acceptable utility.
Based on the above empirical results in terms of utility metrics,
when using the proposed \texttt{PrivMin} algorithm,
we recommend that $\epsilon \leq 1.0$ and $K \leq 20$ or,
$\epsilon \leq 0.8$ and $K \leq 25$.

\section{Conclusions}\label{sec-conclusion}

\emph{Jaccard Similarity Computation} is an essential
process which has been widely used in many real-world applications
such as recommendation and plagiarism detection.
However,
its potential privacy leakage
is an emerging issue that needs to be addressed.
Current research pay little attention on the
\emph{MinHash-based Jaccard Similariy Computation}
(\texttt{MH-JSC})
for designing a differentially private algorithm.
This paper studies the \texttt{MH-JSC}
under the relaxed and the strict differential privacy
with the following contributions:
\begin{itemize}
\item
We first provide a relaxed definition of $\epsilon-$DPSO
that extends the differential privacy into set operations.
It is found that the \texttt{MH-JSC}
satisfies the \emph{Conditional $\epsilon-DPSO$} naturally.
Relevant theorem and detailed proof of privacy analysis
are provided in Section~\ref{sec-dpso}.

\item
Based on the above analysis,
we then proposed the \texttt{PrivMin} algorithm
in Section~\ref{sec-private-jaccard-similarity-computation}
to achieve the differential privacy.
The proposed algorithm consists of two private operations,
the \emph{Private MinHashing Value Generation}
that applies the naive \emph{Exponential} mechanism
for the MinHashing phase,
and the \emph{Randomized MinHashing Steps Selection} which
takes the advantages of the \emph{Randomized Response} technique.
\end{itemize}

These contributions constitute a practical solution
to the differentially private Jaccard similarity computation
with less utility loss.
Our theoretical and experimental analysis
in Section~\ref{sec-analysis} and~\ref{sec-experiment}
show that the proposed \texttt{PrivMin} algorithm could reserve
acceptable utility.

\newpage

\bibliographystyle{plainnat}
\bibliography{PrivMin-arXiv}

\end{document}